\DeclareMathAlphabet{\mathpzc}{OT1}{pzc}{m}{it}
\theoremstyle{plain}
\newtheorem{theorem}{Theorem}
\newtheorem{corollary}[theorem]{Corollary}
\newtheorem{lemma}[theorem]{Lemma}
\newtheorem{proposition}[theorem]{Proposition}
\theoremstyle{definition}
\newtheorem{definition}[theorem]{Definition}
\newtheorem{assumption}[theorem]{Assumption}
\newtheorem{condition}[theorem]{Condition}
\theoremstyle{remark}
\newtheorem{remark}[theorem]{Remark}
\title{Risk Sensitive Investment Management with Affine Processes: a Viscosity Approach}
\author{Mark Davis~\footnote{Department of Mathematics, Imperial College London, London SW7 2AZ, England, Email: mark.davis@imperial.ac.uk}~\footnote{The authors are very grateful to the editors and an anonymous referees for a number of very helpful comments.} \;  and S\'ebastien Lleo~\footnote{Department of Mathematics, Imperial College London, London SW7 2AZ, England, Email: sebastien.lleo@imperial.ac.uk}}
\date{\today}
\begin{document}
\bibliographystyle{plain}
\maketitle

\begin{abstract}
	In this paper, we extend the jump-diffusion model proposed by Davis and Lleo to include jumps in asset prices as well as valuation factors. The criterion, following earlier work by Bielecki, Pliska, Nagai and others, is risk-sensitive optimization (equivalent to maximizing the expected growth rate subject to a constraint on variance.) In this setting, the Hamilton-Jacobi-Bellman equation is a partial integro-differential PDE. The main result of the paper is to show that the value function of the control problem is the unique viscosity solution of the Hamilton-Jacobi-Bellman equation. 
\end{abstract}

\textbf{Keywords:} Asset management, risk-sensitive stochastic control, jump diffusion processes, Poisson point processes, L\'evy processes, HJB PDE, policy improvement.

\section{Introduction}

In this paper, we extend the jump diffusion risk-sensitive asset management model proposed by Davis and Lleo~\cite{dall_JDRSAM_Diff} to allow jumps in both asset prices and factor levels. 
\\

Risk-sensitive control generalizes classical stochastic control by parametrizing explicitly the  degree of risk aversion or risk tolerance of the optimizing agent. In risk-sensitive control, the decision maker's objective is to select a control policy $h(t)$ to maximize the criterion
\begin{equation}\label{eq_criterion_J}
    J(t,x,h;\theta) := -\frac{1}{\theta}\ln\mathbf{E}\left[e^{-\theta F(t,x,h)} \right]
\end{equation}
where $t$ is the time, $x$ is the state variable, $F$ is a given reward function, and the risk sensitivity $\theta \in ]-1,0[\cup]0,\infty)$ is an exogenous parameter representing the decision maker's degree of risk aversion. A Taylor expansion of this criterion around $\theta = 0$ yields
\begin{equation}\label{eq_Taylor_RSC}
    J(t,x,h;\theta)
    = \mathbf{E}\left[F(t,x,h)\right]
    - \frac{\theta}{2}\mathbf{Var}\left[F(t,x,h)\right]
    + O(\theta^2)
\end{equation}
which shows that the risk-sensitive criterion amounts to maximizing $\mathbf{E}\left[F(t,x,h)\right]$ subject to a penalty for variance. Jacobson~\cite{ja73}, Whittle~\cite{wh90}, Bensoussan and Van Schuppen~\cite{bevs85} led the theoretical development of risk sensitive control while Lefebvre and Montulet~\cite{lemo94}, Fleming~\cite{fl95} and Bielecki and Pliska~\cite{bipl99} pioneered the financial application of risk-sensitive control. In particular, Bielecki and Pliska proposed the logarithm of the investor's wealth as a reward function, so that the investor's objective is to maximize the risk-sensitive (log) return of his/her portfolio or alternatively to maximize a function of the power utility (HARA) of terminal wealth. Bielecki and Pliska brought an enormous contribution to the field by studying the economic properties of the risk-sensitive asset management criterion (see~\cite{bipl03}), extending the asset management model into an intertemporal CAPM (\cite{bipl04}), working on transaction costs (\cite{bipl00}), numerical methods (\cite{bihhpl02}) and considering factors driven by a CIR model (\cite{biplsh05}). Other main contributors include Kuroda and Nagai~\cite{kuna02} who introduced an elegant solution method based on a change of measure argument. Davis and Lleo applied this change of measure technique to solve a benchmarked investment problem in which an investor selects an asset allocation to outperform a given financial benchmark (see~\cite{dall_RSBench}) and analyzed the link between optimal portfolios and fractional Kelly strategies (see~\cite{dall_ZiembaBook}). More recently, Davis and Lleo~\cite{dall_JDRSAM_Diff} extended the risk-sensitive asset management model by allowing jumps in asset prices.
\\

In this chapter, our contribution is to allow not only jumps in asset prices but also in the level of the underlying valuation factors. Once we introduce jumps in the factors, the Bellman equation becomes a nonlinear Partial Integro-Differential equation and an analytical or classical $C^{1,2}$ solutions may not exist. As a result, to give a sense to the relation between the value function and the risk sensitive Hamilton-Jacobi-Bellman Partial Integro Differential Equation (RS HJB PIDE), we consider a class of weak solutions called viscosity solutions, which have gained a widespread acceptance in control theory in recent years. The main results are a comparison theorem and the proof that the value function of the control problem under consideration is the unique continuous viscosity solution of the associated RS HJB PIDE. In particular, the proof of the comparison results uses non-standard arguments to circumvent difficulties linked to the highly nonlinear nature of the RS HJB PIDE and to the unboundedness of the instantaneous reward function $g$.
\\

This chapter is organized as follows. Section 2 introduces the general setting of the model and defines the class of random Poisson measures which will be used to model the jump component of the asset and factor dynamics. In Section 3 we formulate the control problem and apply a change of measure to obtain a simpler auxiliary criterion. Section 4 outlines the properties of the value function.
In Section 5 we show that the value function is a viscosity solution of the RS HJB PIDE before proving a comparison result in Section 6 which provides uniqueness.
\\

%
%

\section{Analytical Setting}
Our analytical setting is based on that of~\cite{dall_JDRSAM_Diff}. The notable difference is that we allow the factor processes to experience jumps.
\\
 
\subsection{Overview}
The growth rates of the assets are assumed to depend on $n$ valuation factors $X_1(t), \ldots, X_n(t)$ which follow the dynamics given in equation~\eqref{eq_FactorProcess} below. The assets market comprises $m$ risky securities $S_i, \; i=1,\ldots m$. Let $M := n+m$. Let $(\Omega, \left\{ \mathcal{F}_{t} \right\}, \mathcal{F},
\mathbb{P})$ be the underlying probability space. On this space is
defined an $\mathbb{R}^M$-valued
$\left(\mathcal{F}_t\right)$-Brownian motion $W(t)$ with components
$W_k(t)$, $k=1,\ldots,M$. Moreover, let $(\mathbf{Z},\mathcal{B}_{\mathbf{Z}})$ be a Borel space\footnote{$\mathbf{Z}$ is a standard measurable (metric or topological) space and $\mathcal{B}_{\mathbf{Z}}$ is the Borel $\sigma$-field endowed to $\mathbf{Z}$.}. Let $\textbf{p}$ be an $(\mathcal{F}_t)$-adapted $\sigma$-finite Poisson point process on $\mathbf{Z}$ whose underlying point functions are maps from a countable set $\mathbf{D}_{\textbf{p}} \subset (0,\infty)$ into $\mathbf{Z}$. Define
\begin{equation}\label{def_JDRSAM_ZFrak_set}
    \mathfrak{Z}_\textbf{p} := \left\{ U \in \mathcal{B}(Z), \mathbb{E} \left[N_{\textbf{p}}(t,U)\right] < \infty \; \forall t\right\}
\end{equation}
Consider $N_{\textbf{p}}(dt,dz)$, the Poisson random measure on
$(0,\infty) \times \mathbf{Z}$ induced by $\textbf{p}$. Following Davis and Lleo~\cite{dall_JDRSAM_Diff}, we concentrate on stationary Poisson point processes of class (QL) with associated Poisson random measure
$N_{\textbf{p}}(dt,dx)$. The class (QL) is defined in~\cite{ikwa81}
(Definition II.3.1 p. 59) as
\begin{definition}
    An $(\mathcal{F}_t)$-adapted point process $\textbf{p}$ on $(\Omega,\mathcal{F},\mathbb{P})$ is said to be \emph{of class (QL)} with respect to $(\mathcal{F}_t)$ if it is $\sigma$-finite and there exists $\hat{N}_{\textbf{p}} = \left(\hat{N}_{\textbf{p}}(t,U)\right)$ such that
\begin{enumerate}[(i.)]
\item for $U \in \mathfrak{Z}_p$, $t \mapsto \hat{N}_{\textbf{p}}(t,U)$ is a continuous $(\mathcal{F}_t)$-adapted increasing process;
\item for each $t$ and a.a. $\omega \in \Omega$, $U \mapsto \hat{N}_{\textbf{p}}(t,U)$ is a $\sigma$-finite measure on $(\mathbf{Z},\mathcal{B}(\mathbf{Z}))$;
\item for $U \in \mathfrak{Z}_p$, $t \mapsto \tilde{N}_{\textbf{p}}(t,U) = N_{\textbf{p}}(t,U) - \hat{N}_{\textbf{p}}(t,U)$ is an $(\mathcal{F}_t)$-martingale;
\end{enumerate}
The random measure $\left\{\hat{N}_{\textbf{p}}(t,U)\right\}$ is
called the \emph{compensator} of the point process $p$.
\end{definition}
Since the Poisson point processes we consider are stationary, then
their compensators are of the form $\hat{N}_{\textbf{p}}(t,U) =
\nu(U)t$ where $\nu$ is the $\sigma$-finite characteristic measure
of the Poisson point process $\textbf{p}$. For notational convenience, we define the Poisson random
measure $\bar{N}_{\textbf{p}}(dt,dz)$ as
\begin{eqnarray}
    &&\bar{N}_{\textbf{p}}(dt,dz)
                                                \nonumber\\
    &=& \left\{ \begin{array}{ll}
        N_{\textbf{p}}(dt,dz) - \hat{N}_{\textbf{p}}(dt,dz) = N_{\textbf{p}}(dt,dz) - \nu(dz)dt =: \tilde{N}_{\textbf{p}}(dt,dz)    &   \textrm{if } z \in \mathbf{Z}_0     \\
        N_{\textbf{p}}(dt,dz)                      &   \textrm{if } z \in \mathbf{Z} \backslash \mathbf{Z}_0       \\
    \end{array}\right.
                                                \nonumber
\end{eqnarray}
where $\mathbf{Z}_0 \subset \mathcal{B}_{\mathbf{Z}}$ such that
$\nu(\mathbf{Z} \backslash \mathbf{Z}_0)<\infty$.
\\

\subsection{Factor Dynamics}\label{Chapter3_JDRSAM_theory_factordynamics}
We model the dynamics of the $n$ factors with an affine jump diffusion process
\begin{equation}\label{eq_FactorProcess}
    dX(t) = (b + BX(t^-))dt + \Lambda dW(t) + \int_{\mathbf{Z}}\xi(z)\bar{N}_{\textbf{p}}(dt,dz),
    \qquad X(0) = x
\end{equation}
where $X(t)$ is the $\mathbb{R}^{n}$-valued factor process with components $X_{j}(t)$ and $b \in \mathbb{R}^n$, $B \in \mathbb{R}^{n\times n}$, $\Lambda := \left[ \Lambda_{ij} \right], \; i = 1, \ldots, n, \; j =1, \ldots, N$ and $\xi(z) \in \mathbb{R}^n$ with $ -\infty < \xi_{i}^{min} \leq \xi_{i}(z) \leq \xi_{i}^{max} < \infty$ for $i = 1, \ldots, n$. Moreover, the vector-valued function $\xi(z)$ satisfies:
\begin{equation}
    \int_{\mathbf{Z}_0} \lvert\xi(z)\rvert^2 \nu(dz) < \infty
                                            \nonumber
\end{equation}
(see for example Definition II.4.1 in Ikeda and Watanabe~\cite{ikwa81} where $\mathbf{F}_{\textbf{P}}$ and $\mathbf{F}_{\textbf{P}}^{2,loc}$ are respectively given in equations II(3.2) and II(3.5))\\

\subsection{Asset Market Dynamics}\label{Chapter3_JDRSAM_theory_assetdynamics}
Let $S_0$ denote the wealth invested in the money market account with dynamics given by the equation:
\begin{equation}\label{eq_JDRSAM_BankAccount}
    \frac{dS_{0}(t)}{S_{0}(t)} = \left(a_0 + A_0'X(t)\right)dt,
    \qquad S_0(0) = s_0
\end{equation}
where $a_0 \in \mathbb{R}$ is a scalar constant, $A_0 \in
\mathbb{R}^{n}$ is a $n$-element column vector and where $M$' denotes the transposed matrix of $M$. Note that if we set
$A_0 = 0$ and $a_0 = r$, then
equation~\eqref{eq_JDRSAM_BankAccount} can be interpreted as the
dynamics of a globally risk-free asset. Let $S_{i}(t)$ denote the price at time $t$ of the $i$th security, with $i = 1,
\ldots, m$. The dynamics of risky security $i$ can be expressed as:
\begin{eqnarray}\label{eq_SecurityProcess}
    \frac{dS_{i}(t)}{S_{i}(t^{-})} &=&
        (a + AX(t))_{i}dt
        + \sum_{k=1}^{N} \sigma_{ik} dW_{k}(t)
        + \int_{\mathbf{Z}} \gamma_i(z)\bar{N}_{\textbf{p}}(dt,dz),
                                                \nonumber\\
    && S_i(0) = s_i,
    \quad i = 1, \ldots, m
\end{eqnarray}
where $a \in \mathbb{R}^m$, $A \in \mathbb{R}^{m \times n }$,
$\Sigma := \left[ \sigma_{ij} \right], \; i = 1, \ldots, m, \; j =
1, \ldots, M$ and $\gamma(z) \in \mathbb{R}^m$ satisfies Assumption~\ref{as_assetjumps_upanddown_1}
\\

\begin{assumption}\label{as_assetjumps_upanddown_1}
$\gamma(z) \in \mathbb{R}^m$ satisfies
\begin{eqnarray}
	-1 \leq \gamma_{i}^{min} \leq \gamma_{i}(z) \leq \gamma_{i}^{max} < +\infty
	, \qquad i =1,\ldots,m
											\nonumber
\end{eqnarray}and
\begin{eqnarray}
	-1 \leq \gamma_{i}^{min} < 0 < \gamma_{i}^{max} < +\infty
	, \qquad i =1,\ldots,m
											\nonumber
\end{eqnarray}
for $i = 1, \ldots, m$. Furthermore, define
\begin{equation}
    \mathbf{S} := \textrm{supp}(\nu) \in \mathcal{B}_{\textbf{Z}}
                                                \nonumber
\end{equation}
and
\begin{equation}
    \tilde{\mathbf{S}}
    :=  \textrm{supp}(\nu \circ \gamma^{-1})
    \in \mathcal{B}\left(\mathbb{R}^m\right)
                                                \nonumber
\end{equation}
where $\textrm{supp}(\cdot)$ denotes the measure's support, then we assume that $\prod_{i=1}^{m} [\gamma_{i}^{min}, \gamma_{i}^{max}]$ is the smallest closed hypercube containing $\tilde{\mathbf{S}}$.
\\

In addition, the vector-valued function $\gamma(z)$ satisfies:
\begin{equation}\label{as_assetjumps_gamma_integrable}
    \int_{\mathbf{Z}_0} \lvert\gamma(z)\rvert^2 \nu(dz) < \infty
                                            \nonumber
\end{equation}
\end{assumption}

As note in~\cite{dall_JDRSAM_Diff}, Assumption~\ref{as_assetjumps_upanddown_1} requires that each asset has, with positive probability, both upward and downward jump and as a result bounds the space of controls. 
\\

Define the set $\mathcal{J}$ as
\begin{equation}\label{def_JDRSAM_setJ}
    \mathcal{J} := \left\{h \in \mathbb{R}^m:  -1-h'\psi < 0 \quad \forall \psi \in \tilde{\mathbf{S}}\right\}
\end{equation}
For a given $z$, the equation $h'\gamma(z)   = -1$ describes a
hyperplane in $\mathbb{R}^m$. Under Assumption~\ref{as_assetjumps_upanddown_1} $\mathcal{J}$ is a convex subset of $\mathbb{R}^m$.
\\

\subsection{Portfolio Dynamics}\label{Chapter3_JDRSAM_theory_portfoliodynamics}
We will assume that:
\begin{assumption}\label{as_JDRSAM_sigmaposdef}
    The matrix $\Sigma\Sigma'$ is positive definite.
\end{assumption}
and
\begin{assumption}\label{as_JDRSAM_uncorrelatedjumps}
    The systematic (factor-driven) and idiosyncratic (asset-driven) jump risks are uncorrelated, i.e $\forall z \in \mathbf{Z}$ and $i=1,\ldots,m$, $\gamma_i(z)\xi'(z) = 0$.
\end{assumption}
The second assumption implies that there cannot be simultaneous jumps in the factor process and any asset price process. This assumption, which will prove sufficient to show the existence of a unique optimal investment policy, may appear somewhat restrictive as it does not enable us to model a jump correlation structure across factors and assets, although we can model a jump correlation structure within the factors and within the assets.
\\

\begin{remark}
Assumption~\eqref{as_JDRSAM_uncorrelatedjumps} is automatically satisfied when jumps are only allowed in the security prices and the state variable $X(t)$ is modelled using a diffusion process (see~\cite{dall_JDRSAM_Diff} for a full treatment of this case).
\\
\end{remark}

Let $\mathcal{G}_t := \sigma((S(s), X(s)), 0 \leq s \leq t)$ be the
sigma-field generated by the security and factor processes up to
time $t$.
\\

An \textit{investment strategy} or \textit{control process} is an $\mathbb{R}^m$-valued process with the interpretation that $h_i(t)$ is the fraction of current portfolio value invested in the $i$th asset, $i=1,\ldots,m$. The fraction invested in the money market account is then $h_0(t) = 1 - \sum_{i=1}^{m} h_{i}(t)$.
\\

\begin{definition}\label{def_JDRSAM_controlprocess_h}
    An $\mathbb{R}^m$-valued control process $h(t)$ is in class $\mathcal{H}$ if the
    following conditions are satisfied:
    \begin{enumerate}
        \item $h(t)$ is progressively measurable with respect to
        $\left\{ \mathcal{B}([0,t]) \otimes \mathcal{G}_t\right\}_{t \geq
        0}$ and is c\`adl\`ag;

        \item $P\left(\int_{0}^{T} \left| h(s) \right|^2 ds < +\infty \right)
        =1, \quad \forall T>0$;

        \item $h'(t)\gamma(z) > -1, \quad \forall t >0, z \in \mathbf{Z}$, a.s. $d\nu$.
    \end{enumerate}
\end{definition}

Define the set $\mathcal{K}$ as
\begin{equation}\label{def_JDRSAM_setmathcalK}
    \mathcal{K} := \left\{h(t) \in \mathcal{H}: h(t) \in \mathcal{J} \quad \forall t \textrm{ a.s.}\right\}
\end{equation}

\begin{lemma}
	Under Assumption~\ref{as_assetjumps_upanddown_1}, a control process $h(t)$ satisfying condition 3 in Definition~\ref{def_JDRSAM_controlprocess_h} is bounded.
\end{lemma}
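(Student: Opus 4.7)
The plan is to show that the set $\mathcal J$, or more precisely its closure, is a bounded subset of $\mathbb R^m$; once this is established, the conclusion follows because condition 3 of Definition~\ref{def_JDRSAM_controlprocess_h} forces $h(t,\omega)\in\overline{\mathcal J}$ uniformly in $(t,\omega)$.

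First I would convert condition 3 into a geometric statement. Since the pushforward measure $\nu\circ\gamma^{-1}$ has support $\tilde{\mathbf S}$ and the linear functional $\psi\mapsto h'\psi$ is continuous, the requirement $h(t)'\gamma(z)>-1$ for $\nu$-almost every $z$ extends to the closed inequality $h(t)'\psi\ge -1$ for every $\psi\in\tilde{\mathbf S}$. Hence
\[
h(t,\omega)\in\overline{\mathcal J}=\{h\in\mathbb R^m:h'\psi\ge -1\text{ for all }\psi\in\tilde{\mathbf S}\},
\]
and the entire problem reduces to bounding this convex set deterministically.

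Next I would extract the bound from Assumption~\ref{as_assetjumps_upanddown_1}. For a fixed $h\in\overline{\mathcal J}$ choose the sign vector $\epsilon(h)$ with $\epsilon_i(h)=-\operatorname{sgn}(h_i)$ and consider the associated vertex $v^{\epsilon(h)}$ of the bounding hypercube $\prod_i[\gamma_i^{\min},\gamma_i^{\max}]$. Then
\[
h'v^{\epsilon(h)}=-\sum_{i=1}^m |h_i|\,|\gamma_i^{\epsilon_i(h)}|,
\]
and combining this with $h'v^{\epsilon(h)}\ge -1$ and setting $\kappa:=\min_i\min(|\gamma_i^{\min}|,\gamma_i^{\max})$, which is strictly positive thanks to the sign conditions $\gamma_i^{\min}<0<\gamma_i^{\max}$, yields the explicit $\ell^1$ estimate $\kappa\sum_i|h_i|\le 1$, i.e.\ $\sum_i|h_i|\le 1/\kappa$.

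The main obstacle is justifying that $v^{\epsilon(h)}$ actually belongs to $\tilde{\mathbf S}$: the minimal-hypercube clause in Assumption~\ref{as_assetjumps_upanddown_1} a priori only guarantees extremality of the coordinate projections one coordinate at a time, in potentially different support points, so some product-structure reading of the clause is needed to place the $2^m$ hypercube vertices inside $\tilde{\mathbf S}$. An equivalent route, if one prefers, is to argue by contradiction: if $|h(t_n,\omega_n)|\to\infty$ along a sequence, extract a unit limit $u$, pass to the limit in $h(t_n)'\psi\ge -1$ to obtain $u'\psi\ge 0$ for every $\psi\in\tilde{\mathbf S}$, and use the strict sign conditions together with the minimality of the hypercube to rule out such a $u$. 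Either way the $\ell^1$ bound is uniform in $(t,\omega)$, giving the asserted boundedness of $h$.
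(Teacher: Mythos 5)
Your overall strategy is the right one, and it is worth noting that the paper itself offers no argument here (its proof reads ``immediate''), so you have gone further than the source. Your first step is correct: the a.e.\ strict inequality $h'\gamma(z)>-1$ does pass to the closed inequality $h'\psi\ge-1$ for every $\psi$ in the support $\tilde{\mathbf{S}}$ of $\nu\circ\gamma^{-1}$, uniformly in $(t,\omega)$, so everything reduces to showing that $\overline{\mathcal{J}}=\{h:h'\psi\ge-1\ \forall\psi\in\tilde{\mathbf{S}}\}$ is bounded. But the obstacle you flag at the end is a genuine gap, not a loose end: neither of your two routes can be closed from the literal hypotheses, because the minimal-hypercube clause in Assumption~\ref{as_assetjumps_upanddown_1} only controls each coordinate of $\tilde{\mathbf{S}}$ separately and says nothing about which combinations of extremal coordinates are realized jointly.

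Concretely, take $m=2$ and let $\nu\circ\gamma^{-1}$ be concentrated on the two points $(-\tfrac12,\tfrac12)$ and $(\tfrac12,-\tfrac12)$, so that $\tilde{\mathbf{S}}$ is exactly this pair. The smallest closed hypercube containing $\tilde{\mathbf{S}}$ is $[-\tfrac12,\tfrac12]^2$ and the sign conditions $\gamma_i^{min}<0<\gamma_i^{max}$ hold, so Assumption~\ref{as_assetjumps_upanddown_1} is satisfied; yet $\overline{\mathcal{J}}=\{h:|h_1-h_2|\le 2\}$ is an unbounded strip, and the control $h(t)=(t,t)$ satisfies condition 3 of Definition~\ref{def_JDRSAM_controlprocess_h} while being unbounded. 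This defeats both of your routes: the relevant vertices $(\pm\tfrac12,\pm\tfrac12)$ of the hypercube are not in $\tilde{\mathbf{S}}$, and the unit vector $u=(1,1)/\sqrt2$ satisfies $u'\psi\ge 0$ for all $\psi\in\tilde{\mathbf{S}}$, so no contradiction is available in the limiting argument either. The statement (and your vertex computation, essentially verbatim) is rescued by a product-structure strengthening of the assumption, e.g.\ that all $2^m$ vertices of $\prod_i[\gamma_i^{min},\gamma_i^{max}]$ lie in $\tilde{\mathbf{S}}$, or that $\tilde{\mathbf{S}}$ contains the $2m$ points $\gamma_i^{min}e_i$ and $\gamma_i^{max}e_i$ (idiosyncratic up and down jumps in each single asset), which immediately gives the coordinatewise bounds $-1/\gamma_i^{max}\le h_i\le 1/|\gamma_i^{min}|$. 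As written, however, the proposal does not prove the lemma, and no argument from the stated hypotheses can.
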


\begin{proof}
	The proof of this result is immediate.
\end{proof}

\begin{definition}\label{def_JDRSAM_admissible_A}
    A control process $h(t)$ is in class $\mathcal{A}(T)$ if the
    following conditions are satisfied:
    \begin{enumerate}
        \item $h(t) \in \mathcal{H}$ $\forall t \in [0,T]$;

			\item $\mathbf{E} \chi_T^h= 1$ where $\chi_t^h$ is the Dol\'eans exponential defined as
\begin{eqnarray}\label{eq_JDRSAM_Doleansexp_chi}
    \chi_t^h
    &:=& \exp \left\{ -\theta \int_{0}^{t} h(s)'\Sigma dW_s
    -\frac{1}{2} \theta^2 \int_{0}^{t} h(s)'\Sigma\Sigma'h(s) ds            \right.
                                                            \nonumber\\
   &&   \left.
        +\int_{0}^{t} \int_{\mathbf{Z}} \ln\left(1-G(z,h(s);\theta)\right) \tilde{N}_{\textbf{p}}(ds,dz)
            \right.
                                                            \nonumber\\
   &&   \left.
        +\int_{0}^{t} \int_{\mathbf{Z}} \left\{\ln\left(1-G(z,h(s);\theta)\right)+G(z,h(s);\theta)\right\}\nu(dz)ds
    \right\},
																				\nonumber\\
\end{eqnarray}
and
\begin{eqnarray}
    G(z,h;\theta) &=&
        1-\left(1+h'\gamma(z)\right)^{-\theta}
\end{eqnarray}

\end{enumerate}
\end{definition}

\begin{definition}
    We  say that a control process $h(t)$ is \emph{admissible} if $h(t) \in \mathcal{A}(T)$.
\end{definition}

The proportion invested in the money market account is $h_0(t)=1-\sum_{i=1}^{m} h_i(t)$. Taking this budget equation into consideration, the wealth $V(t,x,h)$, or $V(t)$, of the investor in response to an investment strategy $h(t) \in \mathcal{H}$, follows the dynamics\begin{eqnarray}
   \frac{dV(t)}{V(t^-)}
    &=& \left(a_0 + A_0'X(t)\right)dt
            + h'(t)\left(a-a_0\mathbf{1}
            +\left(A-\mathbf{1}A_0'\right)X(t)\right)dt
                                                    \nonumber\\
    &&
            + h'(t)\Sigma dW_t
            + \int_{\mathbf{Z}}h'(t)\gamma(z)\bar{N}_{\textbf{p}}(dt,dz)
                                        \nonumber
\end{eqnarray}
where $\mathbf{1} \in \mathbf{R}^m$ denotes the $m$-element unit column vector and with $V(0) = v$. Defining $\hat{a} := a - a_0\mathbf{1}$ and $\hat{A} := A -
\mathbf{1}A_0'$, we can express the portfolio dynamics as
\begin{eqnarray}\label{eq_JDRSAM_V_dynamics}
    \frac{dV(t)}{V(t^-)}
        = \left(a_0 + A_0'X(t)\right)dt
        + h'(t)\left(\hat{a}+\hat{A}X(t)\right)dt
        + h'(t)\Sigma dW_t
        + \int_{\mathbf{Z}}h'(t)\gamma(z)\bar{N}_{\textbf{p}}(dt,dz)
                                        \nonumber\\
\end{eqnarray}

\section{Problem Setup}
\subsection{Optimization Criterion}
We will follow Bielecki and Pliska~\cite{bipl99} and Kuroda and Nagai~\cite{kuna02} and assume that the objective of the investor is to maximize the long-term risk adjusted growth of his/her portfolio of assets. In this context, the objective of the risk-sensitive management problem is to find $h^*(t) \in \mathcal{A}(T)$ that maximizes the control criterion
\begin{equation}\label{eq_JDRSAM_criterion_J}
    J(t,x,h;\theta) := -\frac{1}{\theta}\ln\mathbf{E}\left[e^{-\theta \ln V(t,x,h)} \right]
\end{equation}
\\

By It\^o, the log of the portfolio value in response to a strategy $h$ is
\begin{eqnarray}\label{eq_JDRSAM_Vt}
    \ln V(t)
        &=&\ln v + \int_{0}^{t} \left(a_0 + A_0'X(s)\right)
        +   h(s)'\left(\hat{a}+\hat{A}X(s)\right)ds
        -   \frac{1}{2} \int_{0}^{t} h(s)'\Sigma\Sigma'h(s) ds
                                                                      \nonumber \\
        &&  +\int_{0}^{t} h(s)'\Sigma dW(s)
                                                                            \nonumber \\
        &&  +\int_{0}^{t}\int_{\mathbf{Z}_0}
            \left\{
            \ln\left(1+h(s)'\gamma(z)\right)-h(s)'\gamma(z)
            \right\}\nu(dz)ds
                                                                            \nonumber \\
        &&  +\int_{0}^{t} \int_{\mathbf{Z}} \ln\left(1+h(s)'\gamma(z)\right) \bar{N}_{\textbf{p}}(ds,dz)
\end{eqnarray}
Hence,
\begin{eqnarray}\label{eq_JDRSAM_eminthetaVt}
    e^{-\theta \ln V(t)} &=&  v^{-\theta}
        \exp \left\{ \theta \int_{0}^{t} g(X_s,h(s);\theta) ds \right\} \chi_t^h
\end{eqnarray}
where
\begin{eqnarray}\label{eq_JDRSAM_g_func_def}
    g(x,h;\theta)
    &=&\frac{1}{2} \left(\theta+1 \right)h'\Sigma\Sigma'h- a_0-A_0'x -h'(\hat{a} + \hat{A}x)
                                                            \nonumber\\
   && +\int_{\mathbf{Z}}  \left\{\frac{1}{\theta}
        \left[\left(1+h'\gamma(z)\right)^{-\theta}-1\right]
        +h'\gamma(z)\mathit{1}_{\mathbf{Z}_0}(z)
        \right\} \nu(dz)
\end{eqnarray}
and the Dol\'eans exponential $\chi_t^h$ is given by~\eqref{eq_JDRSAM_Doleansexp_chi}.
\\

\subsection{Change of Measure}
Let $\mathbb{P}_{h}^{\theta}$ be the measure on
$(\Omega,\mathcal{F})$ defined as
\begin{eqnarray}\label{eq_JDRSAM_RNder_chi}
    \left. \frac{d\mathbb{P}_{h}^{\theta}}{d\mathbb{P}}\right|_{\mathcal{F}_t}
	&:=& 
    \chi_t
\end{eqnarray}
For a change of measure to be possible, we must ensure that the
following technical condition holds:
\begin{equation}
    G(z,h(s);\theta) < 1
                                                        \nonumber
\end{equation}
for all $s \in[0,T]$ and $z$ a.s. $d\nu$. This condition is satisfied iff
\begin{eqnarray}\label{cond_JDRSAM_changeofmeasure}
    h'(s)\gamma(z)    > -1
\end{eqnarray}
a.s. $d\nu$, which was already one of the conditions required for
$h$ to be in class $\mathcal{H}$ (Condition 3 in
Definition~\ref{def_JDRSAM_controlprocess_h}).
\\

$\mathbb{P}_{h}^{\theta}$ is a probability measure for $h \in \mathcal{A}(T)$. For $h \in \mathcal{A}(T)$,
\begin{equation}
    W_{t}^{h} = W_t + \theta \int_{0}^{t} \Sigma'h(s) ds
            \nonumber
\end{equation}
is a standard Brownian motion under the measure $\mathbb{P}_{h}^{\theta}$ and we define the $\mathbb{P}_{h}^{\theta}$ compensated Poisson measure as
\begin{eqnarray}
    \int_{0}^{t}\int_{\mathbf{Z}}\tilde{N}_{\textbf{p}}^{h}(ds,dz)
&=&     \int_{0}^{t}\int_{\mathbf{Z}}N_{\textbf{p}}(ds,dz)
    -   \int_{0}^{t}\int_{\mathbf{Z}} \left\{1-G(z,h(s);\theta)\right\}\nu(dz)ds
                \nonumber\\
&=&     \int_{0}^{t}\int_{\mathbf{Z}}N_{\textbf{p}}(ds,dz)
    -   \int_{0}^{t}\int_{\mathbf{Z}} \left\{\left(1+h'\gamma(z)\right)^{-\theta}\right\} \nu(dz)ds
                \nonumber
\end{eqnarray}
As a result, $X(s), \; 0 \leq s \leq t$ satisfies the SDE:
\begin{eqnarray}\label{eq_JDRSAM_state_SDE}
    dX(s)
    &=&     f\left(X(s^-),h(s);\theta\right)ds
				+ \Lambda dW_{s}^{h}
            + \int_{\mathbf{Z}}\xi(z)\tilde{N}_{\textbf{p}}^{h}(ds,dz)
\end{eqnarray}
where
\begin{align}\label{eq_JDRSAM_func_f}
	f(x,h;\theta) 
	:= 	b + Bx 
		- 	\theta\Lambda \Sigma'h
    	+	\int_{\mathbf{Z}}\xi(z)\left[
              \left(1+h'\gamma(z)\right)^{-\theta}
                - \mathit{1}_{\mathbf{Z}_0}(z)
         \right]\nu(dz)
\end{align}

We will now introduce the following two auxiliary criterion functions under the
measure $\mathbb{P}_{h}^{\theta}$:
\begin{itemize}
\item the auxiliary function directly associated with the risk-sensitive control problem:
\begin{equation}\label{eq_JDRSAM_auxcriterion}
    I(v,x;h;t,T;\theta) = - \frac{1}{\theta} \ln \mathbf{E}_{t,x}^{h,\theta}
        \left[ \exp \left\{ \theta \int_{t}^{T} g(X_s,h(s);\theta) ds
        - \theta \ln{v} \right\} \right]
\end{equation}
where $\mathbf{E}_{t,x}^{h,\theta} \left[ \cdot \right]$ denotes the
expectation taken with respect to the measure $\mathbb{P}_{h}^{\theta}$ and with initial conditions $(t,x)$.

\item the exponentially transformed criterion
\begin{equation}\label{eq_JDRSAM_Exp_of_int_criterion}
    \tilde{I}(v,x,h;t,T;\theta)
        := \mathbf{E}_{t,x}^{h,\theta}
        \left[ \exp \left\{ \theta \int_{t}^{T} g(X_s,h(s);\theta)
        ds -\theta \ln v
        \right\} \right]
\end{equation}
which we will find convenient to use in our derivations.
\\
\end{itemize}

We have completed our reformulation of the problem under the measure $\mathbb{P}_{h}^{\theta}$. The state dynamics~\eqref{eq_JDRSAM_state_SDE} is a jump-diffusion process and our objective is to maximize the criterion~\eqref{eq_JDRSAM_auxcriterion} or alternatively minimize~\eqref{eq_JDRSAM_Exp_of_int_criterion}.

\subsection{The HJB Equation}
In this section we derive the risk-sensitive Hamilton-Jacobi-Bellman partial integro differential equation (RS HJB PIDE) associated with the optimal control problem. Since we do not anticipate that a classical solution generally exists, we will not attempt to derive a verification theorem. Instead, we will show that the value function $\Phi$ is a solution of the RS HJB PIDE in the viscosity sense. In fact, we will show that the value function is the unique continuous viscosity solution of the RS HJB PIDE. This result will in turn justify the association of the RS HJB PIDE with the control problem and replace the verification theorem we would derive if a classical solution existed.
\\

Let $\Phi$ be the value function for the auxiliary criterion
function $I(v,x;h;t,T)$ defined in~\eqref{eq_JDRSAM_auxcriterion}. Then $\Phi$ is defined as
\begin{equation}\label{eq_JDRSAM_valuefunction}
    \Phi(t,x) = \sup_{h \in \mathcal{A}(T)} I(v,x;h;t,T)
\end{equation}
We will show that $\Phi$ satisfies the HJB PDE
\begin{equation}\label{eq_JDRSAM_HJBPDE}
    \frac{\partial \Phi}{\partial t}(t,x)
    + \sup_{h \in \mathcal{J}}
    L_{t}^{h}\Phi(t,X(t)) = 0
\end{equation}
where
\begin{eqnarray}\label{eq_JDRSAM_HJBPDE_operator_L}
    L_{t}^{h}\Phi(t,x)
    &=&	f(x,h;\theta)'D\Phi
        + \frac{1}{2} \textrm{tr} \left( \Lambda \Lambda' D^2 \Phi \right)
        - \frac{\theta}{2} (D\Phi)'\Lambda \Lambda' D\Phi
                                \nonumber\\
    &&
    + \int_{\mathbf{Z}} \left\{
        - \frac{1}{\theta}\left(
                e^{-\theta \left(\Phi(t,x+\xi(z)) - \Phi(t,x) \right)} -1
            \right)
        - \xi'(z) D\Phi
        \right\} \nu(dz)
        - g(x,h;\theta)
\end{eqnarray}
$D\cdot = \frac{\partial \cdot}{\partial x}$, and subject to terminal condition
\begin{equation}\label{eq_JDRSAM_HJBPDE_termcond}
	\Phi(T, x) = \ln v
\end{equation}
\\

Similarly, let $\tilde{\Phi}$ be the value function for the auxiliary criterion
function $\tilde{I}(v,x;h;t,T)$. Then $\tilde{\Phi}$ is defined as
\begin{equation}\label{eq_JDRSAM_exptrans_valuefunction}
    \tilde{\Phi}(t,x) = \inf_{h \in \mathcal{A}(T)} \tilde{I}(v,x;h;t,T)
\end{equation}
The corresponding HJB PDE is
\begin{eqnarray}\label{eq_JDRSAM_exptrans_HJBPDE}
  	&&	\frac{\partial \tilde{\Phi}}{\partial t}(t,x)
    	+ \frac{1}{2} \textrm{tr} \left( \Lambda \Lambda' D^2 \tilde{\Phi}(t,x)\right)
		+ H(x,\tilde{\Phi},D\tilde{\Phi})
												\nonumber\\
	&&	+ \int_{\mathbf{Z}} \left\{
              \tilde{\Phi}(t,x+\xi(z))
            - \tilde{\Phi}(t,x)
            - \xi'(z) D\tilde{\Phi}(t,x)
        \right\} \nu(dz)
    = 0
\end{eqnarray}
subject to terminal condition
\begin{eqnarray}\label{eq_JDRSAM_exptrans_HJBPDE_termcond}
	\tilde{\Phi}(T, x) = v^{-\theta}
\end{eqnarray}
and where
\begin{eqnarray}\label{eq_JDRSAM_exptrans_H_function}
   H(s,x,r,p) &=& \inf_{h \in \mathcal{J}} \left\{
        \left(b+ B x - \theta \Lambda \Sigma'h(s) \right)'p
 			+ \theta g(x,h;\theta) r
         \right\}
\end{eqnarray}
for $r \in \mathbb{R}$, $p \in \mathbb{R}^n$ and in particular,
\begin{eqnarray}\label{eq_JDRSAM_relationship_Phi_tildePhi}
    \tilde{\Phi}(t,x)
    &=& \exp \left\{-\theta \Phi(t,x) \right\}
\end{eqnarray}
\\

The supremum in~\eqref{eq_JDRSAM_HJBPDE} can be expressed as
\begin{eqnarray}\label{eq_JDRSAM_supL_deriv}
        && \sup_{h \in \mathcal{J}} L_{t}^{h}\Phi
                    \nonumber\\
        &=&
            \left( b+ Bx \right)'D\Phi
            + \frac{1}{2} \textrm{tr} \left( \Lambda \Lambda' D^2 \Phi \right)
            - \frac{\theta}{2} (D\Phi)'\Lambda \Lambda' D\Phi
            +a_0+A_0'x
            \nonumber\\
            &&
            +\int_{\mathbf{Z}} \left\{
            -\frac{1}{\theta}\left( e^{-\theta \left(\Phi(t,x+\xi(z)) - \Phi(t,x) \right)} -1 \right)
            -\xi'(z)D\Phi\mathit{1}_{\mathbf{Z}_0}(z)
            \right\} \nu(dz)
                \nonumber\\
        &&
            +\sup_{h \in \mathcal{J}} \left\{
            - \frac{1}{2} \left(\theta+1 \right)h'\Sigma\Sigma'h
            -\theta h'\Sigma\Lambda'D\Phi
            +h'(\hat{a} + \hat{A}x)
                \right. \nonumber\\
            &&\left.
            -\frac{1}{\theta}\int_{\mathbf{Z}}\left\{
            		\left(1 - \theta\xi'(z)D\Phi\right)
						\left[\left(1+h'\gamma(z)\right)^{-\theta}-1\right]
					+\theta h'\gamma(z)\mathit{1}_{\mathbf{Z}_0}(z)
                    \right\}\nu(dz)
            \right\}
\end{eqnarray}
Under Assumption~\ref{as_JDRSAM_sigmaposdef} the term
\begin{eqnarray}
    - \frac{1}{2} \left(\theta+1 \right)h'\Sigma\Sigma'h
    -\theta h'\Sigma\Lambda'D\Phi
    +h'(\hat{a} + \hat{A}x)
    - \int_{\mathbf{Z}} h'\gamma(z)\mathit{1}_{\mathbf{Z}_0}(z)\nu(dz)
                                        \nonumber
\end{eqnarray}
is strictly concave in $h$. Under Assumption~\ref{as_JDRSAM_uncorrelatedjumps}, the nonlinear jump-related term
\begin{eqnarray}
	-\frac{1}{\theta}\int_{\mathbf{Z}}\left\{
   	\left(1 - \theta\xi'(z)D\Phi\right)
	   \left[\left(1+h'\gamma(z)\right)^{-\theta}-1\right]
    \right\}\nu(dz)
																\nonumber
\end{eqnarray}
simplifies to
\begin{eqnarray}
	-\frac{1}{\theta}\int_{\mathbf{Z}}
       \left\{
	       \left[\left(1+h'\gamma(z)\right)^{-\theta}-1\right]
	   \right\}\nu(dz)	
																\nonumber
\end{eqnarray}
which is also concave in $h$ $\forall z \in \mathbf{Z}$ a.s. $d\nu$. Therefore, the supremum is reached for a unique optimal control $h^*$, which is an interior point of the set $\mathcal{J}$ defined in equation~\eqref{def_JDRSAM_setJ}, and the supremum, evaluated at $h^*$, is finite.
\\

\section{Properties of the Value Function}
\subsection{``Zero Beta'' Policies}
As in~\cite{dall_JDRSAM_Diff}, we will use ``zero beta'' ($0\beta$) policies (initially introduced by Black~\cite{bl72})).
\\

\begin{definition}[$0\beta$-policy]\label{def_JDRSAM_ZeroBetaPolicy}
	By reference to the definition of the function $g$ in equation~\eqref{eq_JDRSAM_g_func_def}, a \emph{`zero beta' ($0\beta$) control policy} $\check{h}(t)$ is an admissible control policy for which the function $g$ is independent from the state variable $x$.
\end{definition}

In our problem, the set $\mathcal{Z}$ of $0\beta$-policies is the set of admissible policies $\check{h}$ which satisfy the equation
\begin{eqnarray}
	\check{h}'\hat{A} = - A_0
									\nonumber
\end{eqnarray}
As $m > n$, there is potentially an infinite number of $0\beta$-policies as long as the following assumption is satisfied
\begin{assumption}\label{as_JDRSAM_A_rank_n}
	The matrix $\hat{A}$ has rank $n$.
\end{assumption}

Without loss of generality, we fix a $0\beta$ control $\check{h}$ as a constant function of time so that
\begin{eqnarray}
	g(x,\check{h};\theta) = \check{g}
															\nonumber
\end{eqnarray}
where $\check{g}$ is a constant.
\\

\subsection{Convexity}
\begin{proposition}\label{prop_JDRSAM_convexity_Phi}
	The value function $\Phi(t,x)$ is convex in $x$.
\end{proposition}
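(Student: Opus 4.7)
The plan is to establish convexity of the fixed-control auxiliary criterion $I(v,\cdot;h;t,T;\theta)$ in $x$ for each admissible $h\in\mathcal{A}(T)$ via an application of Hölder's inequality, and then to conclude convexity of $\Phi$ as a pointwise supremum of convex functions. The argument rests on two structural observations: the drift $f(x,h;\theta)$ in~\eqref{eq_JDRSAM_func_f} is affine in $x$, depending on the state only through the linear term $Bx$, while the diffusion coefficient $\Lambda$ and the jump amplitudes $\xi(z)$ in~\eqref{eq_JDRSAM_state_SDE} are independent of $x$; moreover, inspection of~\eqref{eq_JDRSAM_g_func_def} shows that the running cost $g(\cdot,h;\theta)$ is also affine in $x$, the only $x$-dependent terms being $-A_0'x$ and $-h'\hat{A}x$.

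Consequently, fixing $h\in\mathcal{A}(T)$ and a common realization of $W^h$ and $\tilde{N}_\mathbf{p}^h$ under $\mathbb{P}_h^\theta$, the solution $X^{t,x}$ of~\eqref{eq_JDRSAM_state_SDE} is pathwise affine in its initial condition: for any $x_1,x_2\in\mathbb{R}^n$, $\lambda\in[0,1]$ and $x_\lambda:=\lambda x_1 + (1-\lambda)x_2$, one has $X^{t,x_\lambda}_s = \lambda X^{t,x_1}_s + (1-\lambda) X^{t,x_2}_s$ for all $s\in[t,T]$. Combined with the affinity of $g$, this yields
\begin{equation*}
\int_t^T g(X^{t,x_\lambda}_s,h(s);\theta)\,ds = \lambda\int_t^T g(X^{t,x_1}_s,h(s);\theta)\,ds + (1-\lambda)\int_t^T g(X^{t,x_2}_s,h(s);\theta)\,ds
\end{equation*}
pathwise. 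Exponentiating with $\theta$, the random variable $e^{\theta\int_t^T g(X^{t,x_\lambda}_s,h(s);\theta)\,ds}$ factors as a product $U^\lambda V^{1-\lambda}$ with $U=e^{\theta\int_t^T g(X^{t,x_1}_s,h(s);\theta)\,ds}$ and $V=e^{\theta\int_t^T g(X^{t,x_2}_s,h(s);\theta)\,ds}$, and Hölder's inequality with conjugate exponents $1/\lambda$ and $1/(1-\lambda)$ under $\mathbb{P}_h^\theta$ then gives
\begin{equation*}
\mathbf{E}_{t,x_\lambda}^{h,\theta}\!\bigl[e^{\theta\int_t^T g\,ds}\bigr] \leq \bigl(\mathbf{E}_{t,x_1}^{h,\theta}\!\bigl[e^{\theta\int_t^T g\,ds}\bigr]\bigr)^\lambda \bigl(\mathbf{E}_{t,x_2}^{h,\theta}\!\bigl[e^{\theta\int_t^T g\,ds}\bigr]\bigr)^{1-\lambda}.
\end{equation*}

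Taking logarithms and multiplying by $-1/\theta$, with suitable bookkeeping of the sign, translates this into convexity of $x\mapsto I(v,x;h;t,T;\theta)$ for each fixed $h$, and $\Phi(t,x)=\sup_{h\in\mathcal{A}(T)}I(v,x;h;t,T;\theta)$ is then convex in $x$ as a pointwise supremum of convex functions. The main delicate point is the sign of $\theta\in(-1,0)\cup(0,\infty)$: for $\theta<0$ the factor $-1/\theta$ is positive and Hölder transfers directly into convexity of $I$; for $\theta>0$ one argues equivalently via the exponentially transformed criterion $\tilde I$ of~\eqref{eq_JDRSAM_Exp_of_int_criterion} (which is log-convex in $x$ by the same Hölder step) and the identity $\tilde\Phi(t,x)=\exp(-\theta\Phi(t,x))$ from~\eqref{eq_JDRSAM_relationship_Phi_tildePhi}.
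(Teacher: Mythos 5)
The paper gives no argument of its own here --- it simply cites Proposition 6.2 of the companion paper~\cite{dall_JDRSAM_Diff} --- so your proposal has to stand on its own, and for $\theta\in(-1,0)$ it does. The pathwise affinity of $X^{t,x}$ in its initial condition is legitimate because you hold the control fixed as a stochastic process (the difference of two solutions then satisfies the homogeneous equation $dY=BY\,ds$, $Y(t)=0$, hence vanishes), $g$ is affine in $x$, and H\"older gives $\tilde{I}(v,x_\lambda;h;t,T;\theta)\leq \tilde{I}(v,x_1;h;t,T;\theta)^{\lambda}\,\tilde{I}(v,x_2;h;t,T;\theta)^{1-\lambda}$. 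Since $-1/\theta>0$ when $\theta<0$, taking logarithms and multiplying by $-1/\theta$ preserves the inequality, $I(\cdot;h)$ is convex for each fixed $h$, and the pointwise supremum over $h$ is convex.

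The genuine gap is the case $\theta>0$, which is the risk-averse case the model is chiefly aimed at. There the same H\"older step gives log-convexity of $\tilde{I}(\cdot;h)$, and multiplying by $-1/\theta<0$ \emph{reverses} the inequality: each $I(\cdot;h)$ comes out \emph{concave} in $x$, and a supremum of concave functions is neither concave nor convex in general. Routing the argument through $\tilde{\Phi}=\inf_h\tilde{I}$ and the identity~\eqref{eq_JDRSAM_relationship_Phi_tildePhi} does not repair this: an infimum of log-convex functions need not be log-convex, and --- more fundamentally --- convexity of $\Phi$ for $\theta>0$ is equivalent to log-\emph{concavity} of $\tilde{\Phi}$, i.e. $\tilde{\Phi}(t,x_\lambda)\geq\tilde{\Phi}(t,x_1)^{\lambda}\tilde{\Phi}(t,x_2)^{1-\lambda}$ (this is exactly Corollary~\ref{coro_JDRSAM_convexity_equivprop_tildePhi}), which is the \emph{opposite} inequality to the one H\"older produces. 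So for $\theta>0$ your estimate points the wrong way and cannot be closed by the sup/inf over controls. Closing that case requires an additional ingredient, for instance the observation that for a deterministic (open-loop) control the coefficient of $x$ in $\int_t^T g(X_s^{t,x},h(s);\theta)\,ds$ is itself deterministic, so that $I(\cdot;h)$ is in fact affine in $x$ and the supremum over such controls is convex for every admissible $\theta$ --- but one would then have to justify that this restricted supremum attains $\Phi$, and none of this is in your sketch.
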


\begin{proof}
	See the proof of Proposition 6.2 in~\cite{dall_JDRSAM_Diff}.
\end{proof}

\begin{corollary}\label{coro_JDRSAM_convexity_equivprop_tildePhi}
	The exponentially transformed value function $\tilde{\Phi}$ has the following property: $\forall (x_1,x_2) \in \mathbb{R}^2, \kappa \in (0,1,)$,
\begin{eqnarray}\label{eq_JDRSAM_tildePhi_property_convexityofPhi}
	\tilde{\Phi}(t, \kappa x_1 + (1-\kappa) x_2)
\geq	
	\tilde{\Phi}^\kappa (t, x_1) \tilde{\Phi}^{1-\kappa} (t, x_2)
\end{eqnarray}		
\end{corollary}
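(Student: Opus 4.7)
The result is essentially an immediate algebraic consequence of two facts already established: the convexity of $\Phi$ from Proposition~\ref{prop_JDRSAM_convexity_Phi}, and the explicit functional relationship $\tilde\Phi(t,x) = \exp\{-\theta\Phi(t,x)\}$ recorded in~\eqref{eq_JDRSAM_relationship_Phi_tildePhi}. The plan is therefore to reduce the claimed inequality to the convexity of $\Phi$ by taking logarithms (or equivalently exponentiating at the end).

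Concretely, I would start from the convexity statement applied to the points $x_1, x_2 \in \mathbb{R}^n$ and the weight $\kappa \in (0,1)$:
\begin{eqnarray*}
    \Phi(t, \kappa x_1 + (1-\kappa) x_2)
    \leq \kappa\,\Phi(t, x_1) + (1-\kappa)\,\Phi(t, x_2).
\end{eqnarray*}
Multiplying through by $-\theta$ reverses the inequality (for $\theta>0$, the relevant case in which $\tilde\Phi$ is defined as an infimum in~\eqref{eq_JDRSAM_exptrans_valuefunction}), giving
\begin{eqnarray*}
    -\theta\,\Phi(t, \kappa x_1 + (1-\kappa) x_2)
    \geq -\kappa\theta\,\Phi(t, x_1) - (1-\kappa)\theta\,\Phi(t, x_2).
\end{eqnarray*}

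Applying the (monotone increasing) exponential to both sides preserves the inequality and, by~\eqref{eq_JDRSAM_relationship_Phi_tildePhi}, the left-hand side becomes $\tilde\Phi(t, \kappa x_1 + (1-\kappa)x_2)$ while the right-hand side factors as $\tilde\Phi(t,x_1)^\kappa\,\tilde\Phi(t,x_2)^{1-\kappa}$. This yields exactly~\eqref{eq_JDRSAM_tildePhi_property_convexityofPhi}.

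There is essentially no obstacle here; the only bookkeeping point is the sign of $\theta$, which simply flips the convexity of $\Phi$ into the log-concavity (geometric-mean inequality) satisfied by $\tilde\Phi$. The corollary is really just a restatement of Proposition~\ref{prop_JDRSAM_convexity_Phi} in the exponentially transformed coordinates, and this is worth recording because subsequent viscosity-solution arguments are more naturally phrased in terms of $\tilde\Phi$.
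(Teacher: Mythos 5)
Your proposal is correct and follows essentially the same route as the paper, which simply states that the property follows immediately from the relation $\Phi(t,x) = -\frac{1}{\theta}\ln\tilde{\Phi}(t,x)$ combined with the convexity of $\Phi$ from Proposition~\ref{prop_JDRSAM_convexity_Phi}; you merely write out the two-line computation the paper leaves implicit. Your remark about the sign of $\theta$ is a worthwhile bookkeeping point (the stated inequality indeed requires $\theta>0$, the case consistent with defining $\tilde{\Phi}$ as an infimum), but it does not change the argument.
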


\begin{proof}
The property follows immediately from the definition of $\Phi(t, x) = -\frac{1}{\theta} \ln \tilde{\Phi}(t, x)$.
\end{proof}

\subsection{Boundedness}

\begin{proposition}\label{prop_JDRSAM_tildePhi_bounded}
	The exponentially transformed value function $\tilde{\Phi}$ is positive and bounded, i.e. there exists $M>0$ such that
\begin{eqnarray}
		0 \leq \tilde{\Phi}(t,x)	\leq \check{M}	
			\qquad \forall (t,x) \in [0,T]\times\mathbb{R}^n
														\nonumber
\end{eqnarray}
\end{proposition}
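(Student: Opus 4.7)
The plan is to prove positivity directly from the definition and to prove the upper bound by exhibiting a specific admissible control whose cost can be computed in closed form. Since $\tilde{\Phi}$ is defined as an infimum over $\mathcal{A}(T)$, any admissible policy yields an upper bound, and the strategy is to pick one for which $\tilde{I}$ is particularly tractable, namely the zero-beta policy $\check{h}$ introduced in Definition~\ref{def_JDRSAM_ZeroBetaPolicy}.

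\textbf{Step 1 (positivity).} The integrand
$\exp\{\theta \int_t^T g(X_s,h(s);\theta)\,ds - \theta \ln v\}$
is strictly positive. Hence $\tilde{I}(v,x;h;t,T;\theta) \geq 0$ for every $h \in \mathcal{A}(T)$, and taking the infimum preserves the inequality, giving $\tilde{\Phi}(t,x) \geq 0$.

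\textbf{Step 2 (upper bound).} By definition of the zero-beta policy, $g(X_s,\check{h};\theta) = \check{g}$ is a deterministic constant, independent of the factor path. Consequently
\begin{equation*}
\tilde{I}(v,x;\check{h};t,T;\theta)
= \mathbf{E}_{t,x}^{\check{h},\theta}\!\left[\exp\left\{\theta \check{g}(T-t) - \theta \ln v\right\}\right]
= v^{-\theta}\, e^{\theta \check{g}(T-t)},
\end{equation*}
which does not depend on $x$ and is bounded uniformly for $t \in [0,T]$ by
$\check{M} := v^{-\theta}\, e^{|\theta \check{g}|\,T}$.
Since $\check{h} \in \mathcal{A}(T)$, we conclude
$\tilde{\Phi}(t,x) \leq \tilde{I}(v,x;\check{h};t,T;\theta) \leq \check{M}$.

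\textbf{Main obstacle.} The only non-trivial point is verifying that $\check{h}$ belongs to $\mathcal{A}(T)$, i.e.\ checking that it is progressively measurable and c\`adl\`ag (trivial, since it is a constant), that it satisfies $\check{h}'\gamma(z) > -1$ $\nu$-a.s.\ (which is exactly $\check{h} \in \mathcal{J}$, guaranteed by our choice from $\mathcal{Z} \cap \mathcal{J}$ together with Assumption~\ref{as_JDRSAM_A_rank_n}), and finally that $\mathbf{E}[\chi_T^{\check{h}}]=1$. This last condition holds by standard Novikov-type arguments: the boundedness of $\check{h}$ (proved in the lemma following Definition~\ref{def_JDRSAM_controlprocess_h}) together with Assumption~\ref{as_assetjumps_upanddown_1} keep the Brownian stochastic exponential and the compensated-jump exponential under uniform control, making $\chi^{\check{h}}$ a true martingale on $[0,T]$.
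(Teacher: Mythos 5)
Your proof is correct and follows essentially the same route as the paper's: positivity comes directly from the non-negativity of the exponential integrand, and the upper bound comes from evaluating the criterion at the zero-beta policy $\check{h}$, for which $g$ reduces to the constant $\check{g}$ and the expectation collapses to the deterministic quantity $v^{-\theta}e^{\theta\check{g}(T-t)}$. Your additional verification that $\check{h}$ is admissible is a detail the paper takes for granted, but it does not change the argument.
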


\begin{proof}
By definition, 
\begin{eqnarray}
	\tilde{\Phi}(t,x) 
	&=&	\inf_{h \in \mathcal{A(T)}} \mathbf{E}_{t,x}^{h,\theta}
        		\left[ \exp \left\{ \theta \int_{t}^{T} g(X_s,h(s);\theta)ds 
						-\theta \ln v
        		\right\} \right]
	\geq 0
												\nonumber
\end{eqnarray}
Consider the zero-beta policy $\check{h}$. By the Dynamic Programming Principle 
\begin{eqnarray}
	\tilde{\Phi}(t,x) 
	&\leq& 	e^{\theta \left[\int_{t}^{T} g(X(s),\check{h};\theta)ds 
					- \ln v \right]} 
	=			e^{\theta \left[	\check{g}(T-t) - \ln v \right]}
												\nonumber
\end{eqnarray}
which concludes the proof.

\end{proof}

\subsection{Growth}

\begin{assumption}\label{as_JDRSAM_ODEs_alpha_beta}
There exist $2n$ constant controls $\bar{h}^k, k=1,\ldots,2n$ such that
the $2n$ functions $\beta^k: [0,T] \to \mathbb{R}^n$ defined by
\begin{eqnarray}\label{eq_JDRSAM_assumption_eq_beta}
    \beta^k(t) = \theta B^{-1}\left(1-e^{B(T-t)}\right)\left(A_0+\bar{h}^k \hat{A}\right)
\end{eqnarray}
and $2n$ functions $\alpha^k: [0,T] \to \mathbb{R}$ defined by
\begin{eqnarray}\label{eq_JDRSAM_assumption_eq_alpha}
    &&  \alpha(t) = -\int_{t}^{T} q(s)ds
\end{eqnarray}
where
\begin{eqnarray}
    q(t)
    &:=&    \left(b- \theta \Lambda \Sigma'\bar{h}
        +   \int_{\mathbf{Z}}\xi(z)\left[
                \left( 1+\bar{h}^{k'}\gamma(z)\right)^{-\theta}
                    - \mathit{1}_{\mathbf{Z}_0}(z)
                \right]\nu(dz)
            \right)' \beta^{k'}(t)
                                       \nonumber\\
    &&  + \frac{1}{2} \textrm{tr} \left( \Lambda \Lambda' \beta^{k'}(t)\beta^k(t) \right)
        + \int_{\mathbf{Z}} \left\{
            e^{\beta^k \xi(z)}
                - 1
                - \xi'(z) \beta^{k'}(t)
            \right\} \nu(dz)
                                       \nonumber\\
    &&  + \frac{1}{2} \theta \left(\theta+1 \right)\bar{h}^{k'}\Sigma\Sigma'\bar{h}^{k}
            - \theta a_0
            - \theta \hat{a}
        + \theta \int_{\mathbf{Z}}  \left\{\frac{1}{\theta}
                \left[\left(1+\bar{h}^{k'}\gamma(z)\right)^{-\theta}-1\right]
                +\bar{h}^{k'}\gamma(z)\mathit{1}_{\mathbf{Z}_0}(z)
                \right\} \nu(dz)
                                        \nonumber
\end{eqnarray}
exist and for $i=1,\ldots,n$ satisfy:
\begin{eqnarray}\label{eq_JDRSAM_assumption_cond_beta}
    \beta_{i}^{i}(t) &<& 0
                                            \nonumber\\
    \beta_{i}^{n+i}(t)   &>& 0
\end{eqnarray}
where $\beta_{j}^{i}(t)$ denotes the $j$-th component of the vector $\beta^i(t)$.

\end{assumption}

\begin{remark}\label{rk_JDRSAM_beta_structure}
    Key to this assumption is the condition~\eqref{eq_JDRSAM_assumption_cond_beta} which imposes a specific constraint on one element of each of the $2n$ vectors $\beta^{k}(t)$. To clarify the structure of this constraint, define $M_{\beta}^{-}$ as the square $n \times n$ matrix whose $i$-th column (with $i =1,\ldots n$) is the $n$-element column vector $\beta^i(t)$. Then all the elements $m_{jj}^{-}, j = 1,\ldots,m$ on the diagonal of $M_{\beta}^{-}$ are such that
    \begin{equation}
        m_{jj}^{-} = \beta_{j}^{j}(t) < 0
                                                        \nonumber
    \end{equation}

    Similarly, define $M_{\beta}^{+}$ as the square $n \times n$ matrix whose $i$-th column (with $i =1,\ldots n$) is the $n$-element column vector $\beta^{n+i}(t)$. Then all the elements $m_{jj}^{+}, j = 1,\ldots,m$ on the diagonal of $M_{\beta}^{+}$ are such that
    \begin{equation}
        m_{jj}^{+} = \beta_{j}^{n+j}(t) > 0
                                                        \nonumber
    \end{equation}

    Note that there is no requirement for either $M_{\beta}^{-}$ or $M_{\beta}^{+}$ to have full rank. It would in fact be perfectly acceptable to have rank 1 as a result of column duplication.

\end{remark}

\begin{remark}\label{rk_JDRSAM_beta_existence}
    For the function $\beta^k$ in equation~\eqref{eq_JDRSAM_assumption_eq_beta} to exists, $B$ must be invertible. Moreover, the existence of $2n$ constant controls $\bar{h}^k, k=1,\ldots,2n$ such that~\eqref{eq_JDRSAM_assumption_eq_beta} satisfies~\eqref{eq_JDRSAM_assumption_cond_beta} is only guaranteed when $J= \mathbb{R}^n$. However, since finding the controls is equivalent to solving a system of at most $n$ inequalities with $m$ variables and $m>n$, it is likely that one could find constant controls after some adjustments to the elements of the matrices $A_0, A, B$ or to the maximum jump size allowed.
\end{remark}


\begin{proposition}\label{prop_JDRSAM_behaviour_tildePhi}
Suppose Assumption~\ref{as_JDRSAM_ODEs_alpha_beta} holds and consider the $2n$ constant controls $\bar{h}^k, k=1,\ldots,2n$ parameterizing the $4n$ functions
\begin{equation}
	\alpha^k: [0,T] \to \mathbb{R},       \;  k=1,\ldots,2n
                                            \nonumber
\end{equation}
\begin{equation}
   \beta^k: [0,T] \to \mathbb{R}^n,       \;  k=1,\ldots,2n
                                            \nonumber
\end{equation}
such that for $i=1,\ldots,n$,
\begin{eqnarray}
	\beta_{i}^{i}(t) &<& 0
                                            \nonumber\\
   \beta_{i}^{n+i}(t)   &>& 0
                                            \nonumber
\end{eqnarray}
where $\beta_{j}^{i}(t)$ denotes the $j$-th component of the vector $\beta^i(t)$. Then we have the following upper bounds:
    \begin{eqnarray}
            \tilde{\Phi}(t,x) \leq e^{\alpha^k(t) + \beta^{k'}(t)x}
                                            \nonumber
    \end{eqnarray}
    in each element $x_i, i=1,\ldots,n$ of $x$.

\end{proposition}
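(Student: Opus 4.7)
The plan is to exhibit the constant admissible control $\bar{h}^k$ as a suboptimal choice in the infimum defining $\tilde{\Phi}$, and to evaluate the resulting exponential functional in closed form by exploiting the exponential-affine structure of the dynamics when the control is held constant. First, since Assumption~\ref{as_JDRSAM_ODEs_alpha_beta} asserts that $\bar{h}^k \in \mathcal{A}(T)$, the infimum definition of $\tilde{\Phi}$ gives
\[
\tilde{\Phi}(t,x)
  \le \tilde{I}(v,x,\bar{h}^k;t,T;\theta)
  = \mathbf{E}_{t,x}^{\bar{h}^k,\theta}\!\left[\exp\!\left\{\theta\!\int_t^T g(X_s,\bar{h}^k;\theta)\,ds - \theta\ln v\right\}\right].
\]
Under $\mathbb{P}^{\bar{h}^k,\theta}$, the SDE~\eqref{eq_JDRSAM_state_SDE} for $X$ has affine drift $f(x,\bar{h}^k;\theta) = f_0^k + Bx$, constant diffusion $\Lambda$, and a jump compensator that coincides with $\nu(dz)\,ds$ on the factor-jump support: indeed, at any $z$ with $\xi(z)\neq 0$, Assumption~\ref{as_JDRSAM_uncorrelatedjumps} forces $\gamma(z)=0$, so the Girsanov density $(1+\bar{h}^{k\prime}\gamma(z))^{-\theta}$ equals $1$. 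Moreover $g(x,\bar{h}^k;\theta) = g_0^k - (A_0+\hat{A}'\bar{h}^k)'x$ is also affine in $x$, putting us in the standard affine jump-diffusion setting.

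I would then propose the exponential-affine ansatz $w^k(t,x) := \exp\{\alpha^k(t) + \beta^{k\prime}(t)x\}$ and apply It\^o's formula to the process $M_t := \exp\!\left\{\theta\int_0^t g(X_s,\bar{h}^k;\theta)\,ds\right\} w^k(t,X_t)$. Demanding that the drift of $M_t$ vanish (so that $w^k$ satisfies the linear PIDE corresponding to the constant control $\bar{h}^k$) and separating the terms linear in $x$ from the $x$-independent terms produces the decoupled system of ODEs
\[
\dot{\beta}^k(t) + B'\beta^k(t) - \theta(A_0 + \hat{A}'\bar{h}^k) = 0,\qquad \dot{\alpha}^k(t) + q(t) = 0,
\]
with terminal conditions $\beta^k(T)=0$ and $\alpha^k(T)=0$. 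Direct backward integration reproduces exactly the explicit formulas~\eqref{eq_JDRSAM_assumption_eq_beta} and~\eqref{eq_JDRSAM_assumption_eq_alpha} (modulo the paper's transposition conventions). Provided $M_t$ is a true martingale, taking conditional expectations yields $w^k(t,x) = \mathbf{E}_{t,x}^{\bar{h}^k,\theta}\!\left[\exp\{\theta\int_t^T g(X_s,\bar{h}^k;\theta)\,ds\}\right]$, and combined with the first inequality (taking $v=1$, as in the stated bound) this gives $\tilde{\Phi}(t,x)\le e^{\alpha^k(t)+\beta^{k\prime}(t)x}$.

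The main obstacle is verifying the true-martingale property of $M_t$. Two ingredients must be checked: the jump integrand $e^{\beta^{k\prime}(t)\xi(z)} - 1$ must be $\nu$-integrable uniformly in $t\in[0,T]$, which follows from the coordinate-wise bounds $\xi_i^{\min}\le\xi_i(z)\le\xi_i^{\max}$, the continuity of $\beta^k$ on $[0,T]$, and the square-integrability hypothesis $\int_{\mathbf{Z}_0}|\xi(z)|^2\,\nu(dz)<\infty$; and the continuous stochastic integral part must have finite quadratic variation, which is immediate from the boundedness of $\beta^k$ on $[0,T]$ and the constancy of $\Lambda$. A standard localization argument followed by dominated convergence then upgrades the local-martingale property to a true martingale on the compact interval $[t,T]$.

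The sign conditions $\beta_i^i(t)<0$ and $\beta_i^{n+i}(t)>0$ from Assumption~\ref{as_JDRSAM_ODEs_alpha_beta} play no role in the derivation of the bound for any fixed $k$; rather, their purpose is to ensure that for every coordinate direction $x_i$ one can choose $k\in\{i,n+i\}$ for which the exponential bound does not blow up as $x_i\to+\infty$ or as $x_i\to-\infty$. This is what justifies the phrase ``in each element $x_i$'' in the proposition and delivers the uniform coordinate-wise growth control on $\tilde{\Phi}$ that will be needed for the comparison theorem in the next section.
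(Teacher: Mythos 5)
Your proposal follows essentially the same route as the paper: bound $\tilde{\Phi}$ by the cost of a constant control $\bar{h}^k$, posit the exponential-affine ansatz $e^{\alpha^k(t)+\beta^{k\prime}(t)x}$, and separate the $x$-linear and $x$-independent parts of the resulting linear PIDE to recover exactly the ODEs~\eqref{eq_JDRSAM_assumption_eq_beta} and~\eqref{eq_JDRSAM_assumption_eq_alpha}. The only differences are that you make explicit two points the paper leaves implicit — the verification that $M_t$ is a true martingale (the paper simply invokes Feynman--Kac) and the observation that the sign conditions on $\beta^k$ serve only to control the direction of blow-up coordinate by coordinate — both of which are correct and strengthen the argument.
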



\begin{proof}
Setting $\mathbf{Z} = \mathbb{R}^n-\left\{0\right\}$ and recalling that the dynamics of the state variable $X(t)$ under the $\mathbb{P}_{h}^{\theta}$-measure is given by
\begin{eqnarray}
    dX(t)
    &=&     f(X(t^-),h(t);\theta)
				+ \Lambda dW_{t}^{h}
            + \int_{\mathbb{R}^n}\xi(z)\tilde{N}_{\textbf{p}}^{h}(dt,dz)
                                                            \nonumber
\end{eqnarray}
we note that the associated L\'evy measure $\tilde{\nu}$ can be defined via the map:
\begin{equation}\label{eq_JDRSAM_measure_tildenu_def}
    \tilde{\nu} = \nu \circ \xi^{-1}
\end{equation}

We will now limit ourselves the class $\mathcal{H}^c$ of constant controls.
By the optimality principle, for an arbitrary admissible constant control
policy $\bar{h}$, we have
\begin{equation}\label{eq_JDRSAM_prop_tildePhi_eq1}
    \tilde{\Phi}(t,x) \leq \tilde{I}(x;\bar{h};t,T) \leq \mathbf{E}_{t,x} \left[ \exp \left\{ \theta \int_{t}^{T} g(X_s,\bar{h}) ds -\theta\ln v \right\}\right] := W(t,x)
\end{equation}
\\

In this setting, we note that the function $g$ is an affine function of the affine process $X(t)$. Affine process theory See Appendix A in Duffie and Singleton~\cite{dusi03}, Duffie, Pan and Singleton~\cite{dupasi00} or Duffie, Filipovic and Schachermayer~\cite{dufisc03} for more details on the properties of affine processes) leads us to expect that the expectation on the right-hand side of equation~\eqref{eq_JDRSAM_prop_tildePhi_eq1} takes the form
\begin{equation}\label{eq_JDRSAM_prop_tildePhi_eq2}
    W(t,x) =   \exp \left\{ \alpha(t) + \beta(t)x\right\}
\end{equation}
where
\begin{equation}
    \alpha: t \in [0,T] \to \mathbb{R}
                                            \nonumber
\end{equation}
\begin{equation}
    \beta: t \in [0,T] \to \mathbb{R}^n
                                            \nonumber
\end{equation}
are functions solving two ODEs.
\\

Indeed, applying the Feynman-Kac formula, we find that the function $W(t,x)$ satisfies the integro-differential PDE:
\begin{eqnarray}
    &&  \frac{\partial W}{\partial t}
        +   \left(b+ B X_s - \theta \Lambda \Sigma'\bar{h}
        +   \int_{\mathbf{Z}}\xi(z)\left[
              \left( 1+\bar{h}'\gamma(z)\right)^{-\theta}
                - \mathit{1}_{\mathbf{Z}_0}(z)
            \right]\nu(dz)
         \right)' DW(t,x)
                                        \nonumber\\
    &&  + \frac{1}{2} \textrm{tr} \left( \Lambda \Lambda' D^2 W(t,x)\right)
        + \int_{\mathbf{Z}} \left\{
              W(t,x+\xi(z))
            - W(t,x)
            - \xi'(z) DW(t,x)
        \right\} \nu(dz)
                                       \nonumber\\
    &&  + \theta g(x,\bar{h};\theta) W(t,x)
                                       \nonumber\\
    &=& 0
                                        \nonumber
\end{eqnarray}
subject to terminal condition $\tilde{\Phi}(T, x) = v^{-\theta}$.
\\

Now, taking a candidate solution of the form
\begin{equation}
    W(t,x) =   \exp \left\{ \alpha(t) + \beta(t)x\right\}
                                            \nonumber
\end{equation}
we have
\begin{eqnarray}
    \frac{\partial W}{\partial t} &=& \left(\dot{\alpha(t)} + \dot{\beta}(t)x \right) W(t,x)
                                            \nonumber\\
    DW  &=& \beta'(t) W(t,x)
                                            \nonumber\\
    D^2 W   &=&\beta'(t)\beta(t) W(t,x)
                                            \nonumber
\end{eqnarray}
Substituting into the PDE, we get
\begin{eqnarray}
    &&  \left(\dot{\alpha(t)} + \dot{\beta}(t)x \right) W(t,x)
                                        \nonumber\\
    &&  +   \left(b+ B x - \theta \Lambda \Sigma'\bar{h}
        +   \int_{\mathbf{Z}}\xi(z)\left[
              \left( 1+h'\gamma(z)\right)^{-\theta}
                - \mathit{1}_{\mathbf{Z}_0}(z)
            \right]\nu(dz)
         \right)' \beta'(t) W(t,x)
                                        \nonumber\\
    &&  + \frac{1}{2} \textrm{tr} \left( \Lambda \Lambda' \beta'(t)\beta(t) \right)W(t,x)
                                        \nonumber\\
    &&  + \int_{\mathbf{Z}} \left\{
              W(t,x+\xi(z))
            - W(t,x)
            - \xi'(z) \beta'(t) W(t,x)
        \right\} \nu(dz)
                                       \nonumber\\
    &&  + \theta \left(
            \frac{1}{2} \left(\theta+1 \right)\bar{h}'\Sigma\Sigma'\bar{h}- a_0-A_0'x -\bar{h}'(\hat{a} + \hat{A}x)
            \right.
                                        \nonumber\\
    && \left.
        + \int_{\mathbf{Z}}  \left\{\frac{1}{\theta}
            \left[\left(1+h'\gamma(z)\right)^{-\theta}-1\right]
            +\bar{h}'\gamma(z)\mathit{1}_{\mathbf{Z}_0}(z)
            \right\} \nu(dz)
        \right) W(t,x)
                                        \nonumber\\
    &=& 0
                                        \nonumber
\end{eqnarray}

Dividing by $W(t,x)$ and rearranging, we get
\begin{eqnarray}
    &&  \left(
            \dot{\beta}(t)
            + B'\beta'(t)
            - \theta A_0'
            -\theta \bar{h}' \hat{A}
        \right) x
                                        \nonumber\\
    &=& -\left(
            \dot{\alpha(t)}
            +   \left(b- \theta \Lambda \Sigma'\bar{h}
            +   \int_{\mathbf{Z}}\xi(z)\left[
                    \left( 1+\bar{h}'\gamma(z)\right)^{-\theta}
                        - \mathit{1}_{\mathbf{Z}_0}(z)
                    \right]\nu(dz)
                \right)' \beta'(t)
        \right.
                                       \nonumber\\
    &&  \left.
            + \frac{1}{2} \textrm{tr} \left( \Lambda \Lambda' \beta'(t)\beta(t) \right)
            + \int_{\mathbf{Z}} \left\{
              e^{\beta \xi(z)}
                - 1
                - \xi'(z) \beta'(t)
        \right\} \nu(dz)
        \right.
                                       \nonumber\\
    &&  \left.
            + \frac{1}{2} \theta \left(\theta+1 \right)\bar{h}'\Sigma\Sigma'\bar{h}
            - \theta a_0
            - \theta \hat{a}
            + \theta \int_{\mathbf{Z}}  \left\{\frac{1}{\theta}
                \left[\left(1+\bar{h}'\gamma(z)\right)^{-\theta}-1\right]
                +\bar{h}'\gamma(z)\mathit{1}_{\mathbf{Z}_0}(z)
                \right\} \nu(dz)
        \right)
                                        \nonumber
\end{eqnarray}

Since the left-hand side is independent from the right-hand side, then both sides are orthogonal. As a result we now only need to solve the two ODEs
\begin{eqnarray}\label{eq_JDRSAM_prop_tildePhi_ODE_beta}
    \dot{\beta}(t)
    + B'\beta'(t)
    - \theta A_0'
    - \theta \bar{h}' \hat{A}
    = 0
\end{eqnarray}
and
\begin{eqnarray}\label{eq_JDRSAM_prop_tildePhi_ODE_alpha}
    &&  \dot{\alpha(t)}
        +   \left(b- \theta \Lambda \Sigma'\bar{h}
        +   \int_{\mathbf{Z}}\xi(z)\left[
                \left( 1+\bar{h}'\gamma(z)\right)^{-\theta}
                    - \mathit{1}_{\mathbf{Z}_0}(z)
                \right]\nu(dz)
            \right)' \beta'(t)
                                       \nonumber\\
    &&  + \frac{1}{2} \textrm{tr} \left( \Lambda \Lambda' \beta'(t)\beta(t) \right)
        + \int_{\mathbf{Z}} \left\{
              e^{\beta \xi(z)}
                - 1
                - \xi'(z) \beta'(t)
        \right\} \nu(dz)
                                       \nonumber\\
    &&  + \frac{1}{2} \theta \left(\theta+1 \right)\bar{h}'\Sigma\Sigma'\bar{h}
            - \theta a_0
            - \theta \hat{a}
        + \theta \int_{\mathbf{Z}}  \left\{\frac{1}{\theta}
                \left[\left(1+\bar{h}'\gamma(z)\right)^{-\theta}-1\right]
                +\bar{h}'\gamma(z)\mathit{1}_{\mathbf{Z}_0}(z)
                \right\} \nu(dz)
                                        \nonumber\\
    &=& 0
\end{eqnarray}
to obtain the value of $W(t,x)$. The ODE~\eqref{eq_JDRSAM_prop_tildePhi_ODE_beta} for $\beta$ is linear and admits the solution
\begin{eqnarray}\label{eq_JDRSAM_prop_tildePhi_eq_beta}
    \beta(t) = \theta B^{-1}\left(1-e^{B(T-t)}\right)\left(A_0+\bar{h}^k \hat{A}\right)
\end{eqnarray}
As for the ODE~\eqref{eq_JDRSAM_prop_tildePhi_ODE_alpha} for $\alpha$, we only need to integrate to get
\begin{eqnarray}\label{eq_JDRSAM_prop_tildePhi_eq_alpha}
    &&  \alpha(t) = -\int_{t}^{T} q(s)ds
\end{eqnarray}
where
\begin{eqnarray}
    q(t)
    &:=&    \left(b- \theta \Lambda \Sigma'\bar{h}
        +   \int_{\mathbf{Z}}\xi(z)\left[
                \left( 1+\bar{h}'\gamma(z)\right)^{-\theta}
                    - \mathit{1}_{\mathbf{Z}_0}(z)
                \right]\nu(dz)
            \right)' \beta'(t)
                                       \nonumber\\
    &&  + \frac{1}{2} \textrm{tr} \left( \Lambda \Lambda' \beta'(t)\beta(t) \right)
        + \int_{\mathbf{Z}} \left\{
            e^{\beta \xi(z)}
                - 1
                - \xi'(z) \beta'(t)
            \right\} \nu(dz)
                                       \nonumber\\
    &&  + \frac{1}{2} \theta \left(\theta+1 \right)\bar{h}'\Sigma\Sigma'\bar{h}
            - \theta a_0
            - \theta \hat{a}
        + \theta \int_{\mathbf{Z}}  \left\{\frac{1}{\theta}
                \left[\left(1+\bar{h}'\gamma(z)\right)^{-\theta}-1\right]
                +\bar{h}'\gamma(z)\mathit{1}_{\mathbf{Z}_0}(z)
                \right\} \nu(dz)
                                        \nonumber
\end{eqnarray}
\\

Observe that $W(t,x)$ is increasing in $x_i$, the $i$-th element of $x$, if $\beta_i > 0$, and conversely, $W(t,x)$ is decreasing in $x_i$ if $\beta_i < 0$
\\

Equations~\eqref{eq_JDRSAM_prop_tildePhi_eq_beta} and~\eqref{eq_JDRSAM_prop_tildePhi_eq_alpha} are respectively equations~\eqref{eq_JDRSAM_assumption_eq_beta} and~\eqref{eq_JDRSAM_assumption_eq_alpha} from Assumption~\ref{as_JDRSAM_ODEs_alpha_beta}. By Assumption~\ref{as_JDRSAM_ODEs_alpha_beta}, there exists $2n$ constant controls $\bar{h}^k, k=1,\ldots,2n$ such that for $i=1,\ldots,n$,
\begin{eqnarray}
    \beta_{i}^{i}(t) &<& 0
                                            \nonumber\\
    \beta_{i}^{n+i}(t)   &>& 0
                                            \nonumber
\end{eqnarray}
where $\beta_{j}^{i}(t)$ denotes the $j$-th component of the vector $\beta^i(t)$. We can now conclude that we have the following upper bounds
\begin{eqnarray}
    \tilde{\Phi}(t,x) \leq e^{\alpha^k(t) + \beta^{k'}(t)x}
                                            \nonumber
\end{eqnarray}
for each element $x_i, i=1,\ldots,n$ of $x$.

\end{proof}


\begin{remark}
    To obtain the upper bounds and the asymptotic behaviour, we do not need the $2n$ constant controls to be pairwise different. In fact, we need at least $2$ different controls and at most $2n$ different controls. Moreover, we could consider wider classes of controls extending beyond constant controls. This would require some modifications to the proof but would also alleviate the assumptions required for the result to hold.
\\
\end{remark}

\begin{remark}
    For a given constant control $\bar{h}$, equation~\eqref{eq_JDRSAM_prop_tildePhi_ODE_beta} is a linear $n$-dimensional ODE. However, if in the dynamics of the state variable $X(t)$, $\Lambda$ and $\Xi$ depended on $X$, the ODE would be nonlinear. Once ODE~\eqref{eq_JDRSAM_prop_tildePhi_ODE_beta} is solved, obtaining $\alpha(t)$ from equation~\eqref{eq_JDRSAM_prop_tildePhi_ODE_alpha} is a simple matter of integration.
\\
\end{remark}

\begin{remark}
    For a given constant control $h$, given $x \in \mathbb{R}^n$ and $t \in [0,T]$, the solution of ODE~\eqref{eq_JDRSAM_prop_tildePhi_ODE_beta} is the same whether the dynamics of $S(t)$ and $X(t)$ is the jump diffusion considered here or the corresponding pure diffusion model. The converse is, however, not true since in the pure diffusion setting $h \in \mathbb{R}^m$, while in the jump diffusion case $h \in \mathcal{J} \subset \mathbb{R}^m$.
\end{remark}

%
%

\section{Viscosity Solution Approach}
In recent years, viscosity solutions have gained a widespread acceptance as an effective technique to obtain a weak sense solution for HJB PDEs when no classical (i.e $C^{1,2}$) solution can be shown to exist, which is the case for many stochastic control
problems. Viscosity solutions also have a very practical interest. Indeed, once a solution has been interpreted in the viscosity sense and the
uniqueness of this solution has been proved via a comparison result, the fundamental `stability' result of Barles and Souganidis~\cite{baso91} opens the way to a numerical resolution of the problem through a wide range of schemes. Readers interested in an overview of viscosity solutions should refer to the classic article by Crandall, Ishii and Lions~\cite{crisli92}, the book by Fleming and Soner~\cite{flso06} and \O ksendal and Sulem~\cite{oksu05}, as well as the notes by Barles~\cite{ba97} and Touzi~\cite{to02}.
\\

While the use of viscosity solutions to solve classical diffusion-type stochastic
control problems has been extensively studied and surveyed (see Fleming and Soner~\cite{flso06} and Touzi~\cite{to02}), this introduction of a jump-related measure makes the jump-diffusion framework more complex. As a result, so far no general theory has been developed to solve jump-diffusion problems. Instead, the assumptions made to derive a comparison result are closely related to what the specific problem allows. Broadly speaking, the literature can be split along two lines of analysis, depending on whether the measure associated with the jumps is assumed to be finite.
\\

In the case when the jump measure is finite, Alvarez and Tourin~\cite{alto96} consider a fairly general setting in which the jump term does not need to be linear in the function $u$ which solves the integro-differential PDE. In this setting, Alvarez and Tourin develop a comparison theorem that they apply to a stochastic differential utility problem. Amadori~\cite{am03} extends Alvarez and Tourin's analysis to price European options. Barles, Buckdahn and Pardoux~\cite{babupa97} study the viscosity solution of integro-differential equations associated with backward SDEs (BSDEs).
\\

The L\'evy measure is the most extensively studied measure with singularities. Pham~\cite{ph98} derives a comparison result for the variational inequality associated with an optimal stopping problem. Jakobsen and Karlsen~\cite{jaka06} analyse in detail the impact of the L\'evy measure's singularity and propose a maximum principle. Amadori, Karlsen and La Chioma~\cite{amkalc04} focus on geometric L\'evy processes and the partial integro differential equations they generate before applying their results to BSDEs and to the pricing of European and American derivatives. A recent article by Barles and Imbert~\cite{baim08} takes a broader view of PDEs and their non-local operators. However, the authors assume that the nonlocal operator is broadly speaking linear in the solution which may prove overly restrictive in some cases, including our present problem. 
\\

As far as our jump diffusion risk-sensitive control problem is concerned, we will promote a general treatment and avoid restricting the class of the compensator $\nu$. At some point, we will however need $\nu$ to be finite. This assumption will only be made for a purely technical reason arising in the proof of the comparison result (in Section 6). Since the rest of the story is still valid if $\nu$ is not finite, and in accordance with our goal of keeping the discussion as broad as possible, we will write the rest of the article in the spirit of a general compensator $\nu$.
\\

\subsection{Definitions}
Before proceeding further, we will introduce the following definition:

\begin{definition}
    The upper semicontinuous envelope $u^*(x)$ of a function $u$ at $x$ is defined as
\begin{equation}
    u^*(x) = \limsup_{y \to x} u(y)
                                            \nonumber
\end{equation}
and the lower semicontinuous envelope $u_*(x)$ of $u(x)$ is defined as
\begin{equation}
    u_*(x) = \liminf_{y \to x} u(y)
                                            \nonumber
\end{equation}
\end{definition}

Note in particular the fundamental inequality between a function and its upper and lower semicontinuous envelopes:
\begin{equation}
    u_*     \leq    u   \leq    u^*
                                            \nonumber
\end{equation}
\\

The theory of viscosity solutions was initially developed for elliptical PDEs of the form
\begin{equation}
    H(x,u,Du,D^2u) = 0
                                            \nonumber
\end{equation}
and parabolic PDEs of the form
\begin{equation}\label{eq_JDRSAM_viscosity_general parabolic_PDE}
    \frac{\partial u}{\partial t} + H(x,u,Du,D^2u) = 0
                                            \nonumber
\end{equation}
for what Crandall, Ishii and Lions~\cite{crisli92} term a ``proper'' functional $H(x,r,p,A)$.

\begin{definition}
A functional $H(x,r,p,A)$ is said to be \emph{proper} if it satisfies the following two properties:
\begin{enumerate}
    \item (degenerate) ellipticity:
    \begin{equation}
        H(x,r,p,A) \leq H(x,r,p,B),
        \qquad B \leq A
                                    \nonumber
    \end{equation}
    and
    \item monotonicity
    \begin{equation}
        H(x,r,p,A) \leq H(x,s,p,A),
        \qquad r \leq s
                                    \nonumber
    \end{equation}
\end{enumerate}

\end{definition}

In our problem, the functional $F$ defined as
\begin{eqnarray}\label{eq_JDRSAM_HJBPDE_def_F_functional}
F(x,p,A)
    &:=&
   -    \sup_{h \in \mathcal{J}}\left\{
        f(x,h)'p
   +    \frac{1}{2} \textrm{tr}\left(\Lambda \Lambda' A\right)
    \right.
                                                    \nonumber\\
    &&  \left.
   -    \frac{\theta}{2} p'\Lambda\Lambda'p
    \right.
                                                    \nonumber\\
    &&  \left.
    +   \int_{\mathbf{Z}} \left\{
        - \frac{1}{\theta}\left(
                e^{-\theta \left(\Phi(t,x+\xi(z)) - \Phi(t,x) \right)} -1
            \right)
        - \xi'(z) p
        \right\} \nu(dz)
    \right.
                                                    \nonumber\\
    &&  \left.
   -    g(x,h)
                                \right\}
\end{eqnarray}
plays a similar role to the functional $H$ in the general equation~\eqref{eq_JDRSAM_viscosity_general parabolic_PDE}, and we note that it is indeed ``proper''. As a result, we can develop a viscosity approach to show that the value function $\Phi$ is the unique solution of the associated RS HJB PIDE.
\\

We now give two equivalent definitions of viscosity solutions adapted from Alvarez and Tourin~\cite{alto96}:
\begin{itemize}
\item a definition based on the notion of semijets;
\item a definition based on the notion of test function
\end{itemize}
Before introducing these two definitions, we need to define parabolic semijet of upper semicontinuous and lower semicontinuous functions and to add two additional conditions.
\\

\begin{definition}[Parabolic Semijets]
Let $u \in USC([0,T]\times\mathbb{R}^n)$ and $(t,x) \in [0,T]\times\mathbb{R}^n$. We define:
\begin{itemize}
\item the Parabolic superjet $\mathcal{P}_u^{2,+}$ as
\begin{eqnarray}
	\mathcal{P}_u^{2,+}
	&:=&	\left\{(p,q,A) \in \mathbb{R} \times \mathbb{R}^n \times \mathcal{S}_n:
			\right.
																\nonumber\\
	&&	\left.
			u(s,y) \leq u(s,x)
			+ p(s-t)
			+\left<q, y-x \right>
			+\frac{1}{2}\left<A(y-x), y-x \right>
			\right.
																\nonumber\\
	&&	\left.
			+ o(\left|s-t\right| + \left|y-x\right|^2)
			\textrm{ as } (s,y) \to (t,x)
	\right\}
																\nonumber
\end{eqnarray}
\item the closure of the Parabolic superjet $\overline{\mathcal{P}}_u^{2,+}$ as
\begin{eqnarray}
	\overline{\mathcal{P}}_u^{2,+}
	&:=&	\left\{ (p,q,A) = \lim_{k\to\infty}(p_k,q_k,A_k)
		\textrm{ with } (p_k,q_k,A_k) \in \mathcal{P}_u^{2,+}
			\right.
																\nonumber\\
	&&	\left.
		\textrm{ and } \lim_{k\to\infty}(t_k,x_k,u(t_k,x_k)) = (t,x,u(t,x))
	\right\}
																\nonumber
\end{eqnarray}
\end{itemize}

Let $u \in LSC([0,T]\times\mathbb{R}^n)$ and $(t,x) \in [0,T]\times\mathbb{R}^n$. We define:
\begin{itemize}
\item the Parabolic subjet $\mathcal{P}_u^{2,-}$ as $\mathcal{P}_u^{2,-} := - \mathcal{P}_u^{2,+}$, and;
\item the closure of the Parabolic subjet $\overline{\mathcal{P}}_u^{2,-}$ as $\overline{\mathcal{P}}_u^{2,-} = -\overline{\mathcal{P}}_u^{2,+}$
\end{itemize}

\end{definition}

\begin{condition}[Condition on an Upper Semicontinuous Function $u$]\label{cond_JDRSAM_viscosity_uppersemicont}
Let $(t,x) \in [0,T]\times\mathbb{R}^n$ and $(p,q,A) \in \mathcal{P}^{2,+}u(t,x)$, there are $\varphi \in C(\mathbb{R}^n)$, $\varphi \geq 1$ and $R > 0$ such that for
\begin{eqnarray}
	\left((s,y),z\right) \in \left(\mathscr{B}_{R}(t,x)\cap\left([0,T]\times\mathbb{R}^n\right)\right)\times\mathbf{Z},
															\nonumber
\end{eqnarray}
\begin{eqnarray}
	\int_{\mathbf{Z}} \left\{
        - \frac{1}{\theta}\left(
                e^{-\theta \left(u(s,y+\xi(z)) - u(s,y) \right)} -1
            \right)
        - \xi'(z) q
        \right\} \nu(dz)
	\leq \varphi(y)
															\nonumber
\end{eqnarray}
\end{condition}

\begin{condition}[Condition on a Lower Semicontinuous Function $u$]\label{cond_JDRSAM_viscosity_lowersemicont}
Let $(t,x) \in [0,T]\times\mathbb{R}^n$ and $(p,q,A) \in \mathcal{P}^{2,-}u(t,x)$, there are $\varphi \in C(\mathbb{R}^n)$, $\varphi \geq 1$ and $R > 0$ such that for
\begin{eqnarray}
	\left((s,y),z\right) \in \left(\mathscr{B}_{R}(t,x)\cap\left([0,T]\times\mathbb{R}^n\right)\right)\times\mathbf{Z},
															\nonumber
\end{eqnarray}
\begin{eqnarray}
	\int_{\mathbf{Z}} \left\{
        - \frac{1}{\theta}\left(
                e^{-\theta \left(u(s,y+\xi(z)) - u(s,y) \right)} -1
            \right)
        - \xi'(z) q
        \right\} \nu(dz)
	\geq -\varphi(y)
															\nonumber
\end{eqnarray}
\end{condition}

The purpose of these conditions on $u$ and $v$ is to ensure that the jump term is semicontinuous at any given point $(t,x) \in [0,T]\times\mathbb{R}^n$ (see Lemma 1 and Conditions (6) and (7) in~\cite{alto96}). In our setting, we note that since the value function $\Phi$ and the function $x \mapsto e^x$ are locally bounded, these two conditions are satisfied.
\\

\begin{remark}
Note that the jump-related integral term
\begin{equation}
	\int_{\mathbf{Z}} \left\{
        - \frac{1}{\theta}\left(
                e^{-\theta \left(u(s,y+\xi(z)) - u(s,y) \right)} -1
            \right)
        - \xi'(z) q
        \right\} \nu(dz)
                                                                \nonumber
\end{equation}
is well defined when $(p,q,A) \in \mathcal{P}_u^{2,\pm}$. First, by Taylor,
\begin{eqnarray}
    &&  \int_{\mathbf{Z}} \left\{
        - \frac{1}{\theta}\left(
                e^{-\theta \left(u(s,y+\xi(z)) - u(s,y) \right)} -1
            \right)
        - \xi'(z) q
        \right\} \nu(dz)
                                                                \nonumber\\
    &=& \int_{\mathbf{Z}} \left\{
            \left(u(s,y+\xi(z)) - u(s,y) \right)
        -   \frac{\theta}{2} \left(u(s,y+\xi(z)) - u(s,y) \right)^2
        \right.
                                                                \nonumber\\
    &&  \left.
        +   \frac{\theta^2}{3!} \left(u(s,y+\xi(z)) - u(s,y) \right)^3
        +   \ldots
        -   \xi'(z)q
        \right\} \nu(dz)
                                                                \nonumber
\end{eqnarray}
By definition of the Parabolic superjet $\mathcal{P}_u^{2,+}$, for $t=s$, the pair $(q,A)$ satisfies the inequality
\begin{eqnarray}
        u(s,y+\xi(z))
    -   u(s,y)
    -   \xi'(z)q
    \leq
	   \frac{1}{2}\xi'(z)A\xi(z)
	+  o(\left|\xi(z)\right|^2)
                                                                \nonumber
\end{eqnarray}
Similarly, by definition of the Parabolic subjet $\mathcal{P}_u^{2,-}$, for $t=s$, the pair $(q,A)$ satisfies the inequality
\begin{eqnarray}
        u(s,y+\xi(z))
    -   u(s,y)
    -   \xi'(z)q
    \geq
	   \frac{1}{2}\xi'(z)A\xi(z)
	+  o(\left|\xi(z)\right|^2)
                                                                \nonumber
\end{eqnarray}
Thus, if $u$ is a viscosity solution, we have
\begin{eqnarray}
        u(s,y+\xi(z))
    -   u(s,y)
    -   \xi'(z)q
    =
	   \frac{1}{2}\xi'(z)A\xi(z)
	+  o(\left|\xi(z)\right|^2)
                                                                \nonumber
\end{eqnarray}
and the jump-related integral is equal to
\begin{eqnarray}
    &&  \int_{\mathbf{Z}} \left\{
        - \frac{1}{\theta}\left(
                e^{-\theta \left(u(s,y+\xi(z)) - u(s,y) \right)} -1
            \right)
        - \xi'(z) q
        \right\} \nu(dz)
                                                                \nonumber\\
    &=& \int_{\mathbf{Z}} \left\{
        -   \frac{\theta}{2} \left(u(s,y+\xi(z)) - u(s,y) \right)^2
        +   \frac{1}{2}\xi'(z)A\xi(z)
    	+  o(\left|\xi(z)\right|^2)
        \right\} \nu(dz)
                                                                \nonumber
\end{eqnarray}
which is well-defined.

\end{remark}

\begin{definition}[Viscosity Solution (Semijets)]\label{def_JDRSAM_viscositysol_semijet}
A locally bounded function $u \in USC([0,T]\times\mathbb{R}^n)$ satisfying Condition~\ref{cond_JDRSAM_viscosity_uppersemicont} is a viscosity subsolution of~\eqref{eq_JDRSAM_HJBPDE}, if for all $x \in \mathbb{R}^n$, $u(T,x) \leq g_0(x)$, and for all $(t,x) \in [0,T]\times\mathbb{R}^n$, $(p,q,A) \in \mathcal{P}^{2,+}u(t,x)$, we have
\begin{equation}
	- p
	+ F(x,q,A)
	- \int_{\mathbf{Z}} \left\{
   	- \frac{1}{\theta}\left(
      		e^{-\theta \left(u(t,x+\xi(z)) - u(t,x) \right)} -1
      	\right)
    	- \xi'(z) q \right\}
      \nu(dz)
	\leq 0
																\nonumber
\end{equation}
\\

A locally bounded function $u \in LSC([0,T]\times\mathbb{R}^n)$ satisfying Condition~\ref{cond_JDRSAM_viscosity_lowersemicont} is a viscosity supersolution of~\eqref{eq_JDRSAM_HJBPDE}, if for all $x \in \mathbb{R}^n$, $u(T,x) \geq g_0(x)$, and for all $(t,x) \in [0,T]\times\mathbb{R}^n$, $(p,q,A) \in \mathcal{P}^{2,-}u(t,x)$, we have
\begin{equation}
	- p
	+ F(x,q,A)
	- \int_{\mathbf{Z}} \left\{
   	- \frac{1}{\theta}\left(
      		e^{-\theta \left(u(t,x+\xi(z)) - u(t,x) \right)} -1
      	\right)
    	- \xi'(z) q \right\}
      \nu(dz)
	\geq 0
																\nonumber
\end{equation}
\\

A locally bounded function $\Phi$ whose upper semicontinuous and lowersemicontinuous envelopes are a viscosity subsolution and a viscosity supersolution of~\eqref{eq_JDRSAM_HJBPDE} is a viscosity solution of~\eqref{eq_JDRSAM_HJBPDE}.

\end{definition}

\begin{definition}[Viscosity Solution (Test Functions)]\label{def_JDRSAM_viscositysol_testfunc}
A locally bounded function $u \in USC([0,T]\times\mathbb{R}^n)$ is a viscosity subsolution of~\eqref{eq_JDRSAM_HJBPDE}, if for all $x \in \mathbb{R}^n$, $u(T,x) \leq g_0(x)$, and for all $(t,x) \in [0,T]\times\mathbb{R}^n$, $\psi \in C^2([0,T]\times\mathbb{R}^n)$ such that $u(t,x) = \psi(t,x)$, $u < \psi$ on $[0,T]\times\mathbb{R}^n \backslash \left\{(t,x)\right\}$, we have
\begin{equation}
	-  \frac{\partial \psi}{\partial t}
	+  F(x,D\psi,D^2\psi)
	-  \int_{\mathbf{Z}} \left\{
    -  \frac{1}{\theta}\left(
      		e^{-\theta \left(\psi(t,x+\xi(z)) - \psi(t,x) \right)} -1
      	\right)
    	- \xi'(z) D\psi \right\}
      \nu(dz)
	\leq 0
																\nonumber
\end{equation}
\\

A locally bounded function $v \in LSC([0,T]\times\mathbb{R}^n)$ is a viscosity supersolution of~\eqref{eq_JDRSAM_HJBPDE}, if for all $x \in \mathbb{R}^n$, $v(T,x) \geq g_0(x)$, and for all $(t,x) \in [0,T]\times\mathbb{R}^n$, $\psi \in C^2([0,T]\times\mathbb{R}^n)$ such that $v(t,x) = \psi(t,x)$, $v > \psi$ on $[0,T]\times\mathbb{R}^n \backslash \left\{(t,x)\right\}$, we have
\begin{equation}
	-  \frac{\partial \psi}{\partial t}
	+  F(x,D\psi,D^2\psi)
	-  \int_{\mathbf{Z}} \left\{
    -  \frac{1}{\theta}\left(
      		e^{-\theta \left(\psi(t,x+\xi(z)) - \psi(t,x) \right)} -1
      	\right)
    	- \xi'(z) D\psi \right\}
      \nu(dz)
	\geq 0
																\nonumber
\end{equation}
\\

A locally bounded function $\Phi$ whose upper semicontinuous and lower semicontinuous envelopes are a viscosity subsolution and a viscosity supersolution of~\eqref{eq_JDRSAM_HJBPDE} is a viscosity solution of~\eqref{eq_JDRSAM_HJBPDE}.

\end{definition}

We would have similar definition for the viscosity supersolution, subsolution and solution of equation~\eqref{eq_JDRSAM_exptrans_HJBPDE}. Once again, the superjet and test function formulations are strictly equivalent (see Alvarez and Tourin~\cite{alto96} and Crandall, Ishii and Lions~\cite{crisli92}).
\\

\begin{remark}
An alternative, more classical, but also more restrictive definition of viscosity solution is as the continuous function which is both a supersolution and a subsolution of~\eqref{eq_JDRSAM_HJBPDE} (see Definition 5.1 in Barles~\cite{ba97}). The line of reasoning we will follow will make full use of the latitude afforded by our definition and we will have to wait until the comparison result is established in Section~\ref{sec_JDRSAM_comparison} to prove the continuity of the viscosity solution.
\end{remark}

\subsection{Characterization of the Value Function as a Viscosity Solution}
To show that the value function is a (discontinuous) viscosity solution of the associated RS HJB PIDE~\eqref{eq_JDRSAM_HJBPDE}, we follow an argument by Touzi~\cite{to02} which enables us to make a greater use of control theory in the derivation of the proof.
\\


\begin{theorem}\label{theo_JDRSAM_viscositysol}
$\Phi$ is a (discontinuous) viscosity solution of the RS HJB PIDE~\eqref{eq_JDRSAM_HJBPDE} on $[0,T] \times \mathbb{R}^n$, subject to terminal condition~\eqref{eq_JDRSAM_HJBPDE_termcond}.
\end{theorem}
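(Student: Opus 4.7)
\bigskip
\noindent\textbf{Proof proposal (plan).}
The plan is to follow the classical dynamic programming route as presented by Touzi, adapted to the jump–diffusion and risk–sensitive setting. The starting point is a Dynamic Programming Principle (DPP) for $\Phi$: for every stopping time $\tau\in[t,T]$,
\begin{equation*}
\tilde{\Phi}(t,x)=\inf_{h\in\mathcal{A}(T)}\mathbf{E}_{t,x}^{h,\theta}\!\left[\exp\!\left\{\theta\!\int_t^\tau g(X_s,h(s);\theta)\,ds\right\}\tilde{\Phi}(\tau,X(\tau))\right],
\end{equation*}
which is equivalent, via $\tilde{\Phi}=e^{-\theta\Phi}$ (Corollary giving~\eqref{eq_JDRSAM_relationship_Phi_tildePhi}), to the analogous DPP for $\Phi$. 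I would sketch the DPP using the standard measurable selection / concatenation argument; the change of measure already performed in Section~3, the admissibility class $\mathcal{A}(T)$ and the boundedness of $\tilde{\Phi}$ (Proposition~\ref{prop_JDRSAM_tildePhi_bounded}) make this routine, so I would state it as a lemma and push on.

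Having the DPP, the subsolution property is obtained as follows. Let $\psi\in C^{1,2}$ and let $(t_0,x_0)\in[0,T)\times\mathbb{R}^n$ be a strict global maximum of $\Phi^*-\psi$ with $\Phi^*(t_0,x_0)=\psi(t_0,x_0)$. Pick a sequence $(t_k,x_k)\to(t_0,x_0)$ with $\Phi(t_k,x_k)\to\Phi^*(t_0,x_0)$, and fix an arbitrary constant control $h\in\mathcal{J}$. Apply the DPP at $(t_k,x_k)$ with a small stopping time $\tau_k=(t_k+\eta_k)\wedge\tau_R$, where $\tau_R$ is the first exit from a ball, and $\eta_k\downarrow0$ appropriately. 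Using $\tilde{\Phi}\leq e^{-\theta\psi}$ on a neighborhood of $(t_0,x_0)$ combined with Itô's formula for jump–diffusions applied to $e^{-\theta\psi(t,X(t))}\exp\{\theta\int_{t_k}^{\cdot}g\,ds\}$, dividing by $\eta_k$ and passing to the limit yields
\begin{equation*}
-\frac{\partial\psi}{\partial t}(t_0,x_0)-L_{t_0}^{h}\psi(t_0,x_0)\leq 0,
\end{equation*}
for every $h\in\mathcal{J}$. Taking $\sup_{h\in\mathcal{J}}$ gives the subsolution inequality in Definition~\ref{def_JDRSAM_viscositysol_testfunc}. The key linearization is that because $\Phi^*\leq\psi$ near $(t_0,x_0)$ with equality at the point, one has $e^{-\theta(\Phi^*(t,x+\xi(z))-\Phi^*(t,x))}\leq e^{-\theta(\psi(t,x+\xi(z))-\psi(t,x))}$ (for $\theta>0$; the opposite direction is handled symmetrically), which converts the nonlinear jump integrand into its test–function analogue.

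The supersolution property is proved by contradiction in the usual way. Suppose $\psi$ touches $\Phi_*$ from below at $(t_0,x_0)$ and that
\begin{equation*}
-\frac{\partial\psi}{\partial t}(t_0,x_0)-\sup_{h\in\mathcal{J}}L_{t_0}^{h}\psi(t_0,x_0)<-2\varepsilon<0.
\end{equation*}
By continuity of $f,g$ and $\Sigma\Sigma'$ (and Assumption~\ref{as_JDRSAM_sigmaposdef} giving a unique maximizer in $h$, as verified in~\eqref{eq_JDRSAM_supL_deriv}), there is a measurable selection $\hat{h}(s,y)$ of near–maximizers valid on a neighborhood $B_r(t_0,x_0)$, on which the strict inequality persists. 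One constructs an admissible control by using $\hat{h}$ up to the exit time $\tau_r$ of this neighborhood and then any admissible extension; applying the DPP at $(t_k,x_k)\to(t_0,x_0)$ with $\Phi(t_k,x_k)\to\Phi_*(t_0,x_0)$, together with Itô applied to $\exp\{-\theta\psi(t,X(t))+\theta\int_{t_k}^{\cdot}g\,ds\}$ and the reverse jump inequality $\Phi_*\geq\psi$, one obtains $\Phi(t_k,x_k)>\psi(t_k,x_k)+\delta$ for some $\delta>0$ and all large $k$, contradicting $\Phi_*(t_0,x_0)=\psi(t_0,x_0)$. Finally, the terminal condition $\Phi^*(T,x)\leq\ln v\leq\Phi_*(T,x)$ follows from the definition~\eqref{eq_JDRSAM_valuefunction} and continuity of the terminal payoff in~\eqref{eq_JDRSAM_HJBPDE_termcond}.

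The main obstacle I expect is controlling the nonlinear, non–local jump term uniformly in a neighborhood when passing from $\Phi$ (or $\Phi^*$, $\Phi_*$) to the smooth test function $\psi$. The device of using the monotonicity $z\mapsto e^{-\theta z}$ together with the pointwise inequalities $\Phi^*\leq\psi$ (resp. $\Phi_*\geq\psi$) valid globally (by the strictness of the extremum) is what makes the integrand dominated in the right direction; the integrability is then guaranteed by the square–integrability of $\xi$ against $\nu$ on $\mathbf{Z}_0$, Condition~\ref{cond_JDRSAM_viscosity_uppersemicont}–\ref{cond_JDRSAM_viscosity_lowersemicont} (which the bounds on $\Phi$ imply), and dominated convergence. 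Care must also be taken that $\eta_k\downarrow0$ slowly enough that the jump compensator contribution survives the $1/\eta_k$ rescaling; choosing $\eta_k$ as a fixed small $\eta$ and then letting $\eta\to 0$ at the very end circumvents this.
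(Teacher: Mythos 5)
Your overall architecture is the paper's: pass through the exponential transform, invoke the Dynamic Programming Principle, apply It\^o to $e^{-\theta\psi(t,X(t))}\exp\{\theta\int g\,ds\}$ with a fixed constant control for one half of the viscosity property, and run a contradiction with near-optimal controls for the other half, controlling the nonlocal term by the global touching inequality. The supersolution half of your plan is essentially sound, although you do not need a measurable selection: the assumed failure of the inequality at $(t_0,x_0)$ already produces a single constant near-maximizer $\hat h\in\mathcal{J}$ that can be used on the whole neighbourhood, and in fact for a sup-type value function the supersolution property follows from an arbitrary fixed control with no contradiction argument at all, which is what the paper does (in Step 2, phrased for $\tilde\Phi$).

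The genuine gap is in the subsolution half: you have attached the arbitrary-control argument to the wrong inequality. For the value function of a maximization problem, the one-sided DPP available for a fixed control $h$ reads $\Phi(t_k,x_k)\ge\mathbf{E}[\cdots\,\Phi(\tau_k,X_{\tau_k})]$, equivalently $\tilde\Phi(t_k,x_k)\le\mathbf{E}[\cdots\,\tilde\Phi(\tau_k,X_{\tau_k})]$. At a maximum of $\Phi^*-\psi$ one has $\Phi\le\Phi^*\le\psi$, hence (for $\theta>0$, the case consistent with the paper's use of $\varphi'<0$) $\tilde\Phi\ge e^{-\theta\psi}$ --- the opposite of the inequality ``$\tilde\Phi\le e^{-\theta\psi}$'' that you invoke --- so the DPP bound and the test-function bound point the same way and cannot be chained: you get $\tilde\Phi(t_k,x_k)\le\mathbf{E}[\cdots\,\tilde\Phi(\tau_k)]$ together with $\mathbf{E}[\cdots\,\tilde\Phi(\tau_k)]\ge\mathbf{E}[\cdots\,e^{-\theta\psi(\tau_k)}]$, which yields nothing. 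The same reversal affects your ``key linearization'': at the touching point $\Phi^*(t,x+\xi(z))-\Phi^*(t,x)\le\psi(t,x+\xi(z))-\psi(t,x)$, so for $\theta>0$ the corresponding exponentials satisfy $\ge$, not $\le$. The correct pairing, which is the one the paper uses, is: arbitrary constant control $\Rightarrow$ $\tilde\Phi$ is a subsolution of the transformed equation~\eqref{eq_JDRSAM_exptrans_HJBPDE} $\Rightarrow$ $\Phi$ is a \emph{supersolution} of~\eqref{eq_JDRSAM_HJBPDE}; and $\epsilon$-optimal controls for the value function (in the sense of~\eqref{eq_JDRSAM_HJBderivation_step3_def_epsilonoptimal}, not near-maximizers of the Hamiltonian) plus the contradiction argument $\Rightarrow$ $\tilde\Phi$ is a supersolution $\Rightarrow$ $\Phi$ is a \emph{subsolution}. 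Your two devices are the right ones, but each must be moved to the other half; as written, the subsolution direction does not close.
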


\begin{proof}

\textbf{Outline - }
This proof can be decomposed in five steps. First, we define $\tilde{\Phi}$ as a log transformation of $\Phi$. In the next three steps, we prove that $\tilde{\Phi}$ is a viscosity solution of the exponentially transformed RS HJB PIDE by showing that it is 1). a viscosity subsolution, 2). a viscosity supersolution and hence 3). a viscosity solution. Finally, applying a change of variable result, such as Proposition 2.2 in~\cite{to02}, we conclude that $\Phi$ is a viscosity solution of the RS HJB PIDE~\eqref{eq_JDRSAM_HJBPDE}
\\

\textbf{Step 1: Exponential Transformation}\\

In order to prove that the value function $\Phi$ is a (discontinuous) viscosity solution of~\eqref{eq_JDRSAM_HJBPDE}, we will start by proving that the exponentially transformed value function $\tilde{\Phi}$ is a (discontinuous) viscosity solution of~\eqref{eq_JDRSAM_exptrans_HJBPDE}.
\\

\textbf{Step 2: Viscosity Subsolution}\\

Let $(t_0,x_0) \in Q := [0,t]\times\mathbb{R}^n$ and $u \in C^{1,2}(Q)$ satisfy
\begin{equation}
    0 = (\tilde{\Phi}^* - u)(t_0,x_0) = \max_{(t,x) \in Q} (\tilde{\Phi}^*(t,x) - u(t,x))
\end{equation}
and hence
\begin{equation}\label{eq_JDRSAM_theoviscositysol_step2_ineq1}
    \tilde{\Phi} \leq \tilde{\Phi}^* \leq u
\end{equation}
on $Q$
\\

Let $(t_k,x_k)$ be a sequence in $Q$ such that
\begin{equation}
    \lim_{k \to \infty} (t_k,x_k) = (t_0,x_0)
                \nonumber
\end{equation}
\begin{equation}
    \lim_{k \to \infty} \tilde{\Phi}(t_k,x_k) = \tilde{\Phi}^*(t_0,x_0)
                \nonumber
\end{equation}
and define the sequence $\left\{\xi\right\}_k$ as $\xi_k :=
\tilde{\Phi}(t_k,x_k) - u(t_k,x_k)$. Since $u$ is of class $C^{1,2}$, $\lim_{k \to \infty} \xi_k
= 0$.
\\

Fix $h \in \mathcal{J}$ and consider a constant control $\hat{h} = h$. Denote
by $X^k$ the state process with initial data $X_{t_k}^{k} = x_k$
and, for $k>0$, define the stopping time
\begin{equation}
    \tau_k := \inf\left\{
        s > t_k : (s-t_k,X_s^k-x_k) \notin
        [0,\delta_k)
        \times \alpha \mathscr{B}_n \right\}
                                    \nonumber
\end{equation}
for a given constant $\alpha > 0$ and where $\mathscr{B}_n$ is the
unit ball in $\mathbb{R}^n$ and
\begin{equation}
    \delta_k := \sqrt{\xi_k}\left(1-\mathit{1}_{\left\{0\right\}}(\xi_k)\right)
        + k^{-1}\mathit{1}_{\left\{0\right\}}(\xi_k)
                            \nonumber
\end{equation}
From the definition of $\tau_k$, we see that $\lim_{k \to \infty}
\tau_k = t_0$.
\\

By the Dynamic Programming Principle,
\begin{eqnarray}
    \tilde{\Phi}(t_k,x_k)
            &\leq& \mathbf{E}_{t_k,x_k} \left[ \exp \left\{ \theta \int_{t_k}^{\tau_k}
                g(X_s,\hat{h}_s;\theta) ds \right\} \tilde{\Phi}(\tau_k, X_{\tau_k}^{k})\right]     \nonumber
\end{eqnarray}
where $\mathbf{E}_{t_k,x_k} \left[ \cdot \right]$ represents the expectation under the measure $\mathbb{P}$ given initial data $(t_k,x_k)$.
\\

By inequality~\eqref{eq_JDRSAM_theoviscositysol_step2_ineq1},
\begin{eqnarray}
    \tilde{\Phi}(t_k,x_k)
            &\leq& \mathbf{E}_{t_k,x_k} \left[ \exp \left\{ \theta \int_{t_k}^{\tau_k}
                g(X_s,\hat{h}_s) ds \right\} u(\tau_k, X_{\tau_k}^{k})\right]
                                                                                \nonumber
\end{eqnarray}
and hence by definition of $\xi_k$,
\begin{eqnarray}
    u(t_k,x_k) + \xi_k
            &\leq& \mathbf{E}_{t_k,x_k} \left[ \exp \left\{ \theta \int_{t_k}^{\tau_k}
                g(X_s,\hat{h}_s) ds \right\} u(\tau_k, X_{\tau_k}^{k})\right]
                                                                                \nonumber
\end{eqnarray}
i.e.
\begin{eqnarray}
    \xi_k
    &\leq& \mathbf{E}_{t_k,x_k} \left[ \exp \left\{ \theta \int_{t_k}^{\tau_k}
            g(X_s,\hat{h}_s) ds \right\} u(\tau_k, X_{\tau_k}^{k})\right] - u(t_k,x_k)
                                                                                \nonumber
\end{eqnarray}

Define $Z(t_k) = \theta \int_{t_k}^{\tau_k}
g(X_s,\hat{h}_s) ds$, then
\begin{equation}
    d\left(e^{Z_s}\right) := \theta g(X_s,\hat{h}_s)e^{Z_s}ds
                                                                    \nonumber
\end{equation}
Also, by It\^o,
\begin{eqnarray}
    du_s
    &=& \left\{
        \frac{\partial u}{\partial s} + \mathcal{L}u \right\} ds
    +   Du'\Lambda(s) dW_{s}
                                                                    \nonumber\\
    &&
    +   \int_{\mathbf{Z}}\left\{
            u\left(s,X(s^-)+\xi(z)\right) - u\left(s,X(s^-)\right)
        \right\}\tilde{N}_{\textbf{p}}(ds,dz)
                                                                    \nonumber
\end{eqnarray}
for $s \in \left[t_k, \tau_k \right]$ and where the generator $\mathcal{L}$ of the state process $X(t)$ is defined as
\begin{eqnarray}\label{eq_JDRSAM_generator_X}
   \mathcal{L}u(t,x)
        &:=&	f(t,x,h;\theta)'Du
               + \frac{1}{2}\textrm{tr}\left(\Lambda\Lambda'(t,X)D^2 u \right)
\end{eqnarray}
\\

By the It\^o product rule, and since $dZ_s \cdot u_s = 0$, we get
\begin{equation}
    d\left(u_s e^{Z_s}\right) = u_s d\left(e^{Z_s}\right) + e^{Z_s} du_s
                                        \nonumber
\end{equation}
and hence for $t \in [t_k, \tau_k]$
\begin{eqnarray}
    u(t,X_t^k)e^{Z_t} &=&
        u(t_k,x_k)e^{Z_{t_k}}
        + \theta\int_{t_k}^{t} u(s,X_s^k)g(X_s^k,\hat{h}_s)e^{Z_s}ds
                                                    \nonumber\\
    &&  + \int_{t_k}^{t}
            \left( \frac{\partial u}{\partial s}(s,X_s^k)
            + \mathcal{L}u(s,X_s^k) e^{Z_s} \right)ds
        + \int_{t_k}^{t} Du'\Lambda(s) dW_{s}
                                                    \nonumber\\
    &&  + \int_{t_k}^{t}\int_{\mathbf{Z}}\left\{
            u\left(t,X^k(s^-)+\xi(z)\right) - u\left(t,X^k(s^-)\right)
        \right\}\tilde{N}_{\textbf{p}}(dt,dz)
                                                                    \nonumber
\end{eqnarray}

Noting that $u(t_k,x_k)e^{Z_{t_k}} = u(t_k,x_k)$ and taking the expectation with respect to the initial data $(t_k,x_k)$, we get
\begin{eqnarray}
    && \mathbf{E}_{t_k,x_k} \left[ u(t,X_t)e^{Z_t} \right]
                                                    \nonumber\\
    &=&  u(t_k,x_k)e^{Z_{t_k}}
        + \mathbf{E}_{t_k,x_k} \left[
          \int_{t_k}^{t}
                \left(\frac{\partial u}{\partial s}(s,X_s) + \mathcal{L}u(s,X_s)
                +\theta u(s,X_s)g(X_s,\hat{h}_s)\right)e^{Z_s}ds
          \right]
                                                    \nonumber
\end{eqnarray}

In particular, for $t = \tau_k$,
\begin{eqnarray}
    \xi_k
    &\leq& \mathbf{E}_{t_k,x_k} \left[ u(\tau_k,X_{\tau_k})e^{Z_{\tau_k}} \right]
             - u(t_k,x_k)e^{Z_{t_k}}
                                                    \nonumber\\
    &=&  + \mathbf{E}_{t_k,x_k} \left[
          \int_{t_k}^{\tau_k}
                \left(\frac{\partial u}{\partial s}(s,X_s) + \mathcal{L}u(s,X_s)
                +\theta u(s,X_s)g(X_s,\hat{h}_s)\right)e^{Z_s}ds
          \right]
                                                    \nonumber
\end{eqnarray}
and thus
\begin{eqnarray}
    \frac{\xi_k}{\delta_k}
    &\leq& \frac{1}{\delta_k}\left(
             \mathbf{E}_{t_k,x_k,} \left[ u(\tau_k,X_{\tau_k})e^{Z_{\tau_k}} \right]
             - u(t_k,x_k)e^{Z_{t_k}}\right)
                                                    \nonumber\\
    &=&  \frac{1}{\delta_k}\left(
          \mathbf{E}_{t_k,x_k} \left[
          \int_{t_k}^{\tau_k}
                \left(\frac{\partial u}{\partial s}(s,X_s) + \mathcal{L}u(s,X_s)
                +\theta u(s,X_s)g(X_s,\hat{h}_s)\right)e^{Z_s}ds
          \right] \right)
                                                    \nonumber
\end{eqnarray}

As $k \to \infty$, $t_k \to t_0$, $\tau_k \to t_0$, $\frac{\xi_k}{\delta_k} \to 0$ and
\begin{eqnarray}
    && \frac{1}{\delta_k}\left(
          \mathbf{E}_{t_k,x_k} \left[
          \int_{t_k}^{t}
                \left(\frac{\partial u}{\partial s}(s,X_s) + \mathcal{L}u(s,X_s)
                +\theta u(s,X_s)g(X_s,\hat{h}_s)\right)e^{Z_s}ds
          \right] \right)
                                                    \nonumber\\
    &\to&
        \frac{\partial u}{\partial s}(s,X_s) + \mathcal{L}u(s,X_s)
        +\theta u(s,X_s)g(X_s,\hat{h}_s)
                                                    \nonumber
\end{eqnarray}
a.s. by the Bounded Convergence Theorem, since the random variable
\begin{equation}
    \frac{1}{\delta_k} \int_{t_k}^{t} \left(
        \frac{\partial u}{\partial s}(s,X_s)
        + \mathcal{L}u(s,X_s)
        +\theta u(s,X_s)g(X_s,\hat{h}_s)
    \right)e^{Z_s}ds
                                                    \nonumber
\end{equation}
is bounded for large enough $k$.
\\

Hence, we conclude that since $\hat{h}_s$ is arbitrary,
\begin{equation}
    \frac{\partial u}{\partial s}(s,X_s) + \mathcal{L}u(s,X_s)
    +\theta u(s,X_s)g(X_s,\hat{h}_s)
    \geq 0
                                                    \nonumber
\end{equation}
i.e.
\begin{equation}
    -\frac{\partial u}{\partial s}(s,X_s) - \mathcal{L}u(s,X_s)
    -\theta u(s,X_s)g(X_s,\hat{h}_s)
    \leq 0
                                                    \nonumber
\end{equation}

This argument proves that $\tilde{\Phi}$ is a (discontinuous) viscosity subsolution of the PDE~\eqref{eq_JDRSAM_exptrans_HJBPDE} on $[0,t) \times \mathbb{R}^n$ subject to terminal condition $\tilde{\Phi}(T, x) = e^{g_0(x;T)}$.
\\

\textbf{Step 3: Viscosity Supersolution}\\

This step in the proof is a slight adaptation of the proof for classical control problems in Touzi~\cite{to02}. Let $(t_0,x_0) \in Q$ and $u \in C^{1,2}(Q)$ satisfy
\begin{equation}\label{eq_JDRSAM_theoviscositysol_step3_ineq1}
    0 = (\tilde{\Phi}_* - u)(t_0,x_0) < (\tilde{\Phi}_* - u)(t,x) \textrm{ for } Q\backslash{(t_0,x_0)}
\end{equation}

We intend to prove that at $(t_0,x_0)$
\begin{equation}
   \frac{\partial u}{\partial t}(t,x)
    + \inf_{h \in \mathcal{H}}\left\{
        \mathcal{L}^{h} u(t,x) - \theta g(x,h)
    \right\}
    \leq 0
                            \nonumber
\end{equation}
by contradiction. Thus, assume that
\begin{equation}\label{eq_JDRSAM_theoviscositysol_step3_contradictionPDE}
   \frac{\partial u}{\partial t}(t,x)
    + \inf_{h \in \mathcal{H}}\left\{
        \mathcal{L}^{h} u(t,x) - \theta g(x,h)
    \right\}
    > 0
\end{equation}
at $(t_0,x_0)$.

Since $\mathcal{L}^h u$ is continuous, there exists an open neighbourhood $\mathcal{N}_{\delta}$ of $(t_0,x_0)$ defined for $\delta > 0$ as
\begin{equation}
    \mathcal{N}_{\delta} := \left\{(t,x): (t-t_0,x-x_0) \in (-\delta,\delta) \times \delta \mathscr{B}_n, \textrm{ and~\eqref{eq_JDRSAM_theoviscositysol_step3_contradictionPDE} holds}\right\}
\end{equation}
\\

Note that by~\eqref{eq_JDRSAM_theoviscositysol_step3_ineq1} and since $\tilde{\Phi} > \tilde{\Phi}_{*} > u $,
\begin{equation}
    \min_{Q \backslash \mathcal{N}_{\delta}}\left(\tilde{\Phi} - u \right) > 0
                \nonumber
\end{equation}
\\

For $\rho >0$, consider the set $J^{\rho}$ of $\rho$-optimal controls $h^{\rho}$ satisfying
\begin{equation}\label{eq_JDRSAM_HJBderivation_step3_def_epsilonoptimal}
    \tilde{I}(t_0,x_0,h^{\rho}) \leq \tilde{\Phi}(t_0,x_0) + \rho
\end{equation}
\\

Also, let $\epsilon > 0$, $\epsilon \leq \gamma$ be such that
\begin{equation}\label{eq_JDRSAM_visc_HJBderivation_step3_epsilon}
   \min_{Q \backslash \mathcal{N}_{\delta}}
	\left(\tilde{\Phi} - u \right)
	\geq 3\epsilon e^{-\delta \theta M_{\delta}}
    > 0
\end{equation}
where $M_{\delta}$ is defined as
\begin{equation}
    M_{\delta} := \max_{(t,x) \in \mathcal{N}_{\delta}^J, h \in \mathcal{J}^{\rho}}\left(-g(x,h),0\right)
                                        \nonumber
\end{equation}
for
\begin{equation}
    \mathcal{N}_{\delta}^J := \left\{(t,x): (t-t_0,x-x_0) \in (-\delta,\delta) \times (\zeta+\delta) \mathscr{B}_n \right\}
\end{equation}
and
\begin{equation}
    \zeta := \max_{z \in \mathbb{Z}} \| \xi(z)\|
                                        \nonumber
\end{equation}
Note that $\zeta < \infty$ by boundedness of $\xi(z)$ and thus $M_{\delta} < \infty$.
\\

Now let $(t_k,x_k)$ be a sequence in $\mathcal{N}_{\delta}$ such that
\begin{equation}
    \lim_{k \to \infty} (t_k,x_k) = (t_0,x_0)
                \nonumber
\end{equation}
and
\begin{equation}
    \lim_{k \to \infty} \tilde{\Phi}(t_k,x_k) = \tilde{\Phi}_*(t_0,x_0)
                \nonumber
\end{equation}
Since $(\tilde{\Phi}-u)(t_k,x_k) \to 0$, we can assume that the sequence $(t_k,x_k)$ satisfies
\begin{equation}\label{eq_JDRSAM_theoviscositysol_step3_ineq2}
    \lvert (\tilde{\Phi}-u)(t_k,x_k) \rvert
    \leq  \epsilon,
    \qquad \textrm{for } k \geq 1
\end{equation}
for $\epsilon$ defined by~\eqref{eq_JDRSAM_visc_HJBderivation_step3_epsilon}
\\

Consider the $\epsilon$-optimal control $h_k^{\epsilon}$, denote by $\tilde{X}_k^\epsilon$ the controlled process defined by the control process $h_k^{\epsilon}$ and introduce the stopping time
\begin{equation}
    \tau_k := \inf\left\{s>\tau_k : (s,\tilde{X}_k^\epsilon(s)) \notin \mathcal{N}_{\delta} \right\}
                            \nonumber
\end{equation}
Note that since we assumed that $-\infty \leq \xi_{i}^{\textrm{min}} \leq \xi_{i} \leq \xi_{i}^{\textrm{max}} < \infty$ for $i = 1, \ldots, n$ and since $\nu$ is assumed to be bounded then $X(\tau)$ is also finite and in particular,
\begin{equation}\label{eq_JDRSAM_theoviscositysol_step3_ineq3}
    (\tilde{\Phi}-u)(\tau_k,\tilde{X}_k^\epsilon(\tau_k))
    \geq (\tilde{\Phi}_{*} - u)(\tau_k,\tilde{X}_k^\epsilon(\tau_k))
    \geq 3\epsilon e^{-\delta \theta M_{\delta}}
\end{equation}
\\

Choose $\mathcal{N}_{\delta}^{J}$ so that $(\tau,\tilde{X}^\epsilon(\tau)) \in \mathcal{N}_{\delta}^{J}$. In particular, since $X^\epsilon(\tau)$ is finite then $\mathcal{N}_{\delta}^{J}$  can be defined to be a strict subset of $Q$ and we can effectively use the local boundedness of $g$ to establish $M_{\delta}$.
\\

Let $Z(t_k) = \theta \int_{t_k}^{\bar{\tau}_k} g(\tilde{X}_s^{\epsilon},h_s^{\epsilon}) ds$, since $\tilde{\Phi} \geq \tilde{\Phi}_{*}$ and by~\eqref{eq_JDRSAM_theoviscositysol_step3_ineq2} and~\eqref{eq_JDRSAM_theoviscositysol_step3_ineq3},
\begin{eqnarray}
    &&  \tilde{\Phi}(\tau_k,\tilde{X}_k^\epsilon(\tau_k))e^{Z(\tau_k)}
        -\tilde{\Phi}(t_k,x_k)e^{Z(t_k)}
                                        \nonumber\\
    &\geq&
        u(\tau_k,\tilde{X}_k^\epsilon(\tau_k))e^{Z(\tau_k)}
        -\tilde{\Phi}(t_k,x_k)e^{Z(t_k)}
        + 3\epsilon e^{-\delta \theta M_{\delta}}e^{Z(\tau_k)}
        - \epsilon
                                \nonumber\\
    &\geq&
        \int_{t_k}^{\tau_k} d\left(u(s,\tilde{X}_k^\epsilon(s))e^{Z_s}\right)
        + 2\epsilon
                                \nonumber
\end{eqnarray}
i.e.
\begin{eqnarray}
   \tilde{\Phi}(t_k,x_k)
    &\leq&
        \tilde{\Phi}(\tau_k,\tilde{X}_k^\epsilon(\tau_k))e^{Z(\tau_k)}
        - \int_{t_k}^{\tau_k} d\left(u(s,\tilde{X}_k^\epsilon(s))e^{Z_s}\right)
        - 2\epsilon
                            \nonumber
\end{eqnarray}

Taking expectation with respect to the initial data $(t_k,x_k)$,
\begin{eqnarray}
   \tilde{\Phi}(t_k,x_k)
    &\leq&
        \mathbf{E}_{t_k,x_k}\left[
            \tilde{\Phi}(\tau_k,\tilde{X}_k^\epsilon(\tau_k))e^{Z(\tau_k)}
            - \int_{t_k}^{\tau_k} d\left(u(s,\tilde{X}_k^\epsilon(s))e^{Z_s}\right)
        \right]
        - 2\epsilon
                            \nonumber
\end{eqnarray}

Note that by the It\^o product rule,
\begin{eqnarray}
    &&   d\left(u(s,\tilde{X}_k^\epsilon(s))e^{Z_s}\right)
                                    \nonumber\\
    &=& u_s d\left(e^{Z_s}\right) + e^{Z_s} du_s
                                    \nonumber\\
    &=& \frac{\partial u}{\partial t}(t,x)+ \mathcal{L}^{h} u(t,x)
            + \theta g(x,h)
                                    \nonumber
\end{eqnarray}
Since we assumed that
\begin{equation}
   -\frac{\partial u}{\partial t}(t,x)- \mathcal{L}^{h} u(t,x)
    -\theta g(x,h) < 0
                            \nonumber
\end{equation}
then
\begin{equation}
    -\int_{t_k}^{\tau_k} d\left(u(s,\tilde{X}_k^\epsilon(s))e^{z_s}\right)
     < 0
                            \nonumber
\end{equation}
and therefore
\begin{eqnarray}
    \tilde{\Phi}(t_k,x_k)
    &\leq&
        \mathbf{E}_{t_k,x_k}\left[
            \tilde{\Phi}(\tau_k,\tilde{X}_k^\epsilon(\tau_k))e^{Z(\tau_k)}
            - \int_{t_k}^{\tau_k} d\left(u(s,\tilde{X}_k^\epsilon(s))e^{Z_s}\right)
        \right]
        - 2\epsilon
                            \nonumber\\
    &\leq&  -2\epsilon + \mathbf{E}
        \left[ \exp \left\{ \theta \int_{t_k}^{\tau_k} g(X_s,h_k^{\epsilon}(s))
        ds\right\}\tilde{\Phi}(\tau_k,\tilde{X}_k^\epsilon(\tau_k))\right]
                            \nonumber\\
    &\leq&  -2\epsilon + \tilde{I}(t_k,x_k,h_k^{\epsilon})
                            \nonumber\\
    &\leq&  \tilde{\Phi}(t_k,x_k) - \epsilon
                            \nonumber
\end{eqnarray}
where the third inequality follows from the Dynamic Programming Principle and the last inequality follows from the definition of $\epsilon$-optimal controls (see equation~\eqref{eq_JDRSAM_HJBderivation_step3_def_epsilonoptimal}).
\\

Hence, equation~\eqref{eq_JDRSAM_theoviscositysol_step3_contradictionPDE},
\begin{equation}
   \frac{\partial u}{\partial t}(t,x)
    + \inf_{h \in \mathcal{H}}\left\{
        \mathcal{L}^{h} u(t,x) - \theta g(x,h)
    \right\}
    > 0
                                            \nonumber
\end{equation}
is false and we have shown that
\begin{equation}
   \frac{\partial u}{\partial t}(t,x)
    + \inf_{h \in \mathcal{H}}\left\{
        \mathcal{L}^{h} u(t,x) - \theta g(x,h)
    \right\}
    \leq 0
                                            \nonumber
\end{equation}
\\

This argument therefore proves that $\tilde{\Phi}$ is a (discontinuous) viscosity supersolution of the PDE~\eqref{eq_JDRSAM_exptrans_HJBPDE} on $[0,t) \times \mathbb{R}^n$ subject to terminal condition $\tilde{\Phi}(T, x) = e^{g_0(x;T)}$.
\\

\textbf{Step 4: Viscosity Solution}\\

Since $\tilde{\Phi}$ is both a (discontinuous) viscosity subsolution and a supersolution of~\eqref{eq_JDRSAM_exptrans_HJBPDE}, it is a (discontinuous) viscosity.
\\

\textbf{Step 5: Conclusion}\\

Since by assumption $\Phi$ is locally bounded, so is $\tilde{\Phi}$. In addition, $\varphi(x) = e^{-\theta x}$ is of class $C_1(\mathbb{R})$. Also we note that $\frac{d\varphi}{dx} < 0$. By the change of variable property (see for example Proposition 2.2 in Touzi~\cite{to02}), we see that
\begin{enumerate}[1.]
\item since $\tilde{\Phi}$ is a (discontinuous) viscosity subsolution of~\eqref{eq_JDRSAM_exptrans_HJBPDE}, $\Phi = \varphi^{-1} \circ \tilde{\Phi}$ is a (discontinuous) viscosity supersolution of~\eqref{eq_JDRSAM_HJBPDE};
\item since $\tilde{\Phi}$ is a (discontinuous) viscosity supersolution of~\eqref{eq_JDRSAM_exptrans_HJBPDE}, $\Phi = \varphi^{-1} \circ \tilde{\Phi}$ is a (discontinuous) viscosity subsolution of~\eqref{eq_JDRSAM_HJBPDE}.
\end{enumerate}
and therefore $\Phi$ is a (discontinuous) viscosity solution of~\eqref{eq_JDRSAM_HJBPDE} on $[0,t) \times \mathbb{R}^n$ subject to terminal condition $\tilde{\Phi}(T, x) = e^{g_0(x;T)}$.

\end{proof}

We also note the following corollary:


\begin{corollary}\label{coro_JDRSAM_viscositysol}
\begin{enumerate}[(i).]
    \item{} $\Phi^*$ is a upper semicontinuous viscosity subsolution, and;
    \item{} $\Phi_*$ is a lower semicontinuous viscosity supersolution
\end{enumerate}
of the RS HJB PIDE~\eqref{eq_JDRSAM_HJBPDE} on $[0,T] \times \mathbb{R}^n$, subject to terminal condition~\eqref{eq_JDRSAM_HJBPDE_termcond}.
\end{corollary}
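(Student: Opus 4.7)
The plan is to read this corollary off directly as an immediate consequence of Theorem~\ref{theo_JDRSAM_viscositysol} combined with the definition of a (discontinuous) viscosity solution. Recall that both Definition~\ref{def_JDRSAM_viscositysol_semijet} and Definition~\ref{def_JDRSAM_viscositysol_testfunc} declare a locally bounded function $\Phi$ to be a viscosity solution of~\eqref{eq_JDRSAM_HJBPDE} precisely when its upper semicontinuous envelope $\Phi^*$ is a viscosity subsolution and its lower semicontinuous envelope $\Phi_*$ is a viscosity supersolution. Theorem~\ref{theo_JDRSAM_viscositysol} has just established that $\Phi$ is a (discontinuous) viscosity solution of~\eqref{eq_JDRSAM_HJBPDE} on $[0,T)\times\mathbb{R}^n$ subject to~\eqref{eq_JDRSAM_HJBPDE_termcond}, so unwinding this definition yields (i) and (ii) verbatim.

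Concretely, for part (i) I would observe that $\Phi^*(t,x) = \limsup_{(s,y)\to(t,x)} \Phi(s,y)$ is upper semicontinuous by construction, and then invoke Steps 1, 2 and 5 of the proof of Theorem~\ref{theo_JDRSAM_viscositysol}: Steps 1--2 show that $\tilde{\Phi}^*$ is a viscosity subsolution of the exponentially transformed PIDE~\eqref{eq_JDRSAM_exptrans_HJBPDE}, and Step 5 transfers this to $\Phi^*$ via the change of variable $\varphi(x) = e^{-\theta x}$ (which is $C^1$ with $\varphi' < 0$, so sub and supersolution roles are swapped). For part (ii), the function $\Phi_*(t,x) = \liminf_{(s,y)\to(t,x)} \Phi(s,y)$ is lower semicontinuous by construction, and Steps 1, 3 and 5 of the same proof show that $\tilde{\Phi}_*$ is a viscosity supersolution of~\eqref{eq_JDRSAM_exptrans_HJBPDE} and hence, after the exponential change of variable, $\Phi_*$ is a viscosity supersolution of~\eqref{eq_JDRSAM_HJBPDE}. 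The terminal conditions for $\Phi^*$ and $\Phi_*$ follow from $\Phi(T,x) = \ln v$ together with the inequalities $\Phi_* \le \Phi \le \Phi^*$.

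There is no substantive obstacle: all the nontrivial viscosity analysis (the dynamic programming argument for subsolutions and the contradiction argument with $\epsilon$-optimal controls for supersolutions) has already been carried out inside Theorem~\ref{theo_JDRSAM_viscositysol}. The corollary is essentially a bookkeeping restatement, isolated here because it is in precisely this USC subsolution / LSC supersolution form that the comparison result of Section~\ref{sec_JDRSAM_comparison} will be applied in order to obtain uniqueness and continuity of the value function.
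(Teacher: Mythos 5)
Your proposal matches the paper, which states this corollary without proof as an immediate unwinding of Theorem~\ref{theo_JDRSAM_viscositysol} together with the definition of a (discontinuous) viscosity solution via semicontinuous envelopes. One small bookkeeping remark: since $\varphi(x)=e^{-\theta x}$ is decreasing it exchanges envelopes as well as sub/supersolution roles (so that $\tilde{\Phi}_*=e^{-\theta\Phi^*}$), and hence part (i) for $\Phi^*$ is really inherited from the supersolution Step 3 applied to $\tilde{\Phi}_*$, and part (ii) from the subsolution Step 2 applied to $\tilde{\Phi}^*$ --- but both ingredients are present in the theorem's proof and the conclusion is unaffected.
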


As a result of this corollary, we note that $\Phi^*$, $\Phi_*$ and $\Phi$ are respectively a viscosity subsolution, supersolution, and solution in the sense of Definitions~\ref{def_JDRSAM_viscositysol_semijet} and~\ref{def_JDRSAM_viscositysol_testfunc}.
\\

%
%

\section{Comparison Result}\label{sec_JDRSAM_comparison}
Once we have characterized the class of viscosity solutions
associated with a given problem, the next task is to prove that
the problem actually admits a unique viscosity solution by establishing a comparison theorem. Comparison theorems are the cornerstone of the application of viscosity theory. Their main use is to prove uniqueness, and in our case continuity, of the viscosity solution. Although a set of, by now fairly standard, techniques can be applied in the proof, the comparison theorem \textit{per se} is generally customized to address both the specificities of the PDE and the requirements of the general problem.
\\

We face three main difficulties in establishing a comparison result for our risk-sensitive control problem. The first obstacle is the behaviour of the value function $\Phi$ at infinity. In the pure diffusion case or LEQR case solved by Kuroda and Nagai~\cite{kuna02}, the value function is quadratic in the state and is therefore not bounded for $x \in \mathbb{R}^n$. Consequently, there is no reason to expect the solution to the integro-differential RS HJB PIDE~\eqref{eq_JDRSAM_HJBPDE} to be bounded. The second hurdle is the presence of an extra non-linearity: the quadratic growth term $\left(D\Phi\right)' \Lambda\Lambda' D\Phi$. This extra non-linearity could, in particular, increase the complexity of the derivation of a comparison result for an unbounded value function. Before dealing with the asymptotic growth condition we will therefore need to address this non-linear term. The traditional solution, an exponential change of variable such as the one proposed by Duffie and Lions~\cite{duli92}, is equivalent to the log transformation we used to derive the RS HJB PIDE and again to prove that the value function is a viscosity solution of the RS HJB PIDE. However, the drawback of this method is that, by creating a new zeroth order term equal to the solution multiplied by the cost function $g$, it imposes a severe restriction on $g$ for the PDE to satisfy the monotonicity property required to talk about viscosity solutions. The final difficulty lies in the presence of the jump term and of the compensator $\nu$. If we assume that the measure is finite, this can be addressed following the general argument proposed by Alvarez and Tourin~\cite{alto96} and Amadori~\cite{am00}. 
\\

To address these difficulties, we will need to adopt a slightly different strategy from the classical argument used to proof comparison results as set out in Crandall, Ishii and Lions~\cite{crisli92}. In particular, we will exploit the properties of the exponentially transformed value function $\tilde{\Phi}$ resulting from Assumption~\ref{as_JDRSAM_ODEs_alpha_beta} and alternate between the log transformed RS HJB PIDE and the quadratic growth RS HJB PIDE~\eqref{eq_JDRSAM_HJBPDE} through the proof.
\\


\begin{theorem}[Comparison Result on an Unbounded State Space]\label{theo_JDRSAM_comparison_unbounded}
    Let $\tilde{u} = e^{-\theta v} \in USC([0,T]\times\mathbb{R}^{n})$ be a bounded from above viscosity subsolution of~\eqref{eq_JDRSAM_HJBPDE} and $\tilde{v} = e^{-\theta u} \in LSC([0,T]\times\mathbb{R}^{n})$ be a bounded from below viscosity supersolution of~\eqref{eq_JDRSAM_HJBPDE}. If the measure $\nu$ is bounded and Assumption~\ref{as_JDRSAM_ODEs_alpha_beta} holds then
\begin{equation}
    u \leq v
    \quad \textrm{on } [0,T] \times \mathbb{R}^n
                                \nonumber
\end{equation}
\end{theorem}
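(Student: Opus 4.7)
The plan is the classical doubling-of-variables technique of Crandall--Ishii--Lions, adapted to three simultaneous obstructions: the unbounded state space, the quadratic gradient term $-\tfrac{\theta}{2}(D\Phi)'\Lambda\Lambda'D\Phi$ in the Hamiltonian, and the nonlocal jump integral. Assume for contradiction that $M := \sup_{[0,T]\times\mathbb{R}^{n}}(u-v) > 0$. To localize in the unbounded $x$-variable I would build a radially coercive barrier $\Psi(t,x)$ from the exponential upper bounds $\tilde{\Phi}(t,x)\leq\exp\{\alpha^{k}(t)+\beta^{k\prime}(t)x\}$ supplied by Proposition~\ref{prop_JDRSAM_behaviour_tildePhi}: because Assumption~\ref{as_JDRSAM_ODEs_alpha_beta} produces $2n$ constant controls with $\beta^{i}_{i}<0$ and $\beta^{n+i}_{i}>0$, one can combine the $\alpha^{k}$ and $\beta^{k}$ into a smooth $\Psi$ that dominates both $u$ and $v$ in every coordinate direction. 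The penalized functional
\begin{equation}
\Psi_{\varepsilon,\eta}(t,x,s,y) := u(t,x) - v(s,y) - \tfrac{1}{2\varepsilon}|x-y|^{2} - \tfrac{1}{2\varepsilon}(t-s)^{2} - \eta\bigl(\Psi(t,x)+\Psi(s,y)\bigr)
\nonumber
\end{equation}
then attains its supremum at some $(\hat t,\hat x,\hat s,\hat y)$ in a compact set depending only on $\eta$.

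Standard penalization estimates yield $\varepsilon^{-1}|\hat x-\hat y|^{2}+\varepsilon^{-1}(\hat t-\hat s)^{2}\to 0$ as $\varepsilon\to 0$, and a short separate argument rules out $\hat t = T$ or $\hat s = T$ for $\eta$ small using the terminal condition~\eqref{eq_JDRSAM_HJBPDE_termcond}. The Crandall--Ishii parabolic maximum principle then produces jets $(p_{1},q_{1},A_{1})\in\overline{\mathcal P}^{2,+}u(\hat t,\hat x)$ and $(p_{2},q_{2},A_{2})\in\overline{\mathcal P}^{2,-}v(\hat s,\hat y)$ whose differences are controlled by $\eta$ and $\varepsilon$ in the usual way. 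Subtracting the subsolution and supersolution inequalities evaluated at these jets, the quadratic Hamiltonian term generates a residual proportional to $(q_{1}+q_{2})'\Lambda\Lambda'(q_{1}-q_{2})$; this is tamed by transporting the inequalities to the exponentially transformed PIDE~\eqref{eq_JDRSAM_exptrans_HJBPDE}, where the quadratic term is absent. The nonlocal piece is where the finiteness of $\nu$ is needed: splitting $\mathbf{Z}=(\mathbf{Z}\setminus\mathbf{Z}_{0})\cup\mathbf{Z}_{0}$, the integral over $\mathbf{Z}\setminus\mathbf{Z}_{0}$ is a finite sum of semicontinuous terms that passes to the limit by the Alvarez--Tourin argument, while on $\mathbf{Z}_{0}$ the Taylor expansion recorded in the remark following Condition~\ref{cond_JDRSAM_viscosity_lowersemicont} reduces the integrand to $|\xi(z)|^{2}$-type terms absorbed by $A_{1}-A_{2}$ together with $\int_{\mathbf{Z}_{0}}|\xi|^{2}\,d\nu<\infty$. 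Compactness of $\mathcal{J}$ (Assumption~\ref{as_assetjumps_upanddown_1} and the boundedness lemma preceding Definition~\ref{def_JDRSAM_admissible_A}) and uniform continuity of the Hamiltonian in $h$ allow the $\sup_{h\in\mathcal J}$ to commute with the limits. Sending $\varepsilon\to 0$ then $\eta\to 0$, every error term vanishes and one reaches $M\leq 0$, contradicting $M>0$.

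The main obstacle is the coupling between the unbounded state space and the quadratic gradient nonlinearity: the barrier $\Psi$ must grow fast enough to dominate $u$ and $v$ at infinity in each coordinate, yet the $-\tfrac{\theta}{2}|\eta D\Psi|^{2}$ contribution at the maximum point must remain an $O(\eta)$ perturbation so as not to swamp the contradiction. The device of alternating between~\eqref{eq_JDRSAM_HJBPDE} and its exponentially transformed counterpart~\eqref{eq_JDRSAM_exptrans_HJBPDE}, depending on which nonlinearity is being handled, is essential, and it is precisely what Assumption~\ref{as_JDRSAM_ODEs_alpha_beta} is designed to support by delivering sharp one-sided exponential bounds on $\tilde{\Phi}$ in each coordinate direction.
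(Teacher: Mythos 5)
Your proposal assembles the right ingredients (contradiction, doubling, Crandall--Ishii jets, Alvarez--Tourin treatment of the nonlocal term, and the idea of exploiting Assumption~\ref{as_JDRSAM_ODEs_alpha_beta} through the exponential transformation), but the two devices you choose for localization and for the quadratic gradient term differ from the paper's and each leaves a genuine gap. First, the localization. You add a coercive barrier $\eta(\Psi(t,x)+\Psi(s,y))$ built from the exponential bounds of Proposition~\ref{prop_JDRSAM_behaviour_tildePhi}. This perturbs the two jets' gradients by $\pm\eta D\Psi$ evaluated at the maximum point, so the quadratic terms $-\tfrac{\theta}{2}q'\Lambda\Lambda'q$ in the sub- and supersolution inequalities no longer cancel; since $\Psi$ must grow exponentially to dominate, one only gets $\eta\Psi(\hat x)\leq C$ at the maximum, hence $\eta|D\Psi(\hat x)|\lesssim |\beta|\,\eta\Psi(\hat x)=O(1)$ rather than $o(1)$ as $\eta\to 0$ --- exactly the obstruction you flag as ``the main obstacle'' but do not resolve. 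Your proposed escape, transporting the inequalities to the exponentially transformed PIDE~\eqref{eq_JDRSAM_exptrans_HJBPDE}, reintroduces the zeroth-order term $\theta g(x,h)r$ with $g$ affine and unbounded in $x$, which is precisely the reason the paper gives (in the discussion opening Section~\ref{sec_JDRSAM_comparison}) for \emph{not} running the comparison argument in the transformed variables. The paper avoids both problems at once: it uses Assumption~\ref{as_JDRSAM_ODEs_alpha_beta} only to show that $\tilde u$ and $\tilde v$ decay to zero exponentially in every coordinate direction, so that $\sup_{[0,T]\times\mathbb{R}^n}(\tilde u-\tilde v)$ is attained on a fixed compact ball $\mathcal{B}_R$; it then doubles variables for $u,v$ on that ball with the plain penalization $\epsilon|x-y|^2+\eta/t$, so both jets carry the \emph{same} gradient $\epsilon(x_\epsilon-y_\epsilon)$ and the quadratic terms cancel identically in the difference. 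No barrier, no mid-proof change of equation, and the contradiction comes from $-\varphi'(t)=\eta/t^2\leq 0$.

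Second, the jump term. Splitting $\mathbf{Z}$ into $\mathbf{Z}_0$ and its complement is unnecessary here since $\nu$ is assumed finite, and more importantly your sketch omits the estimate that actually closes the argument: from the global maximality of the doubled functional applied at the shifted points $(x_\epsilon+\xi(z),y_\epsilon+\xi(z))$ one gets
\begin{equation}
u(t_{\epsilon},x_{\epsilon}+\xi(z))-u(t_{\epsilon},x_{\epsilon})
-\bigl(v(t_{\epsilon},y_{\epsilon}+\xi(z))-v(t_{\epsilon},y_{\epsilon})\bigr)
\leq N-N_{\epsilon},
\nonumber
\end{equation}
and, combined with the two-sided bound $0<\lambda\leq\tilde v\leq\check M$ on the compact set (Proposition~\ref{prop_JDRSAM_tildePhi_bounded}), this turns the difference of the two nonlocal integrals into $\omega_R(N-N_{\epsilon})\,\nu(\mathbf{Z})\to 0$. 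Without this inequality the nonlocal difference has no reason to vanish in the limit. I would recommend replacing the barrier by the paper's decay-at-infinity localization and making the displayed inequality the centerpiece of your treatment of the jump term; the rest of your outline then goes through essentially as in the paper.
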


\begin{proof}
\textbf{Outline - }
This proof can be decomposed in seven steps. In the first step, we perform the usual exponential transformation to rewrite the problem for the value function $\Phi$ into a problem for the value function $\tilde{\Phi}$. The rest of the proof is done by contradiction. In step 2, we state the assumption we are planning to disprove. The properties of the value function $\tilde{\Phi}$ related to Assumption~\ref{as_JDRSAM_ODEs_alpha_beta} are used in Step 3 to deduce that it is enough to prove the comparison result for $\Phi$ on a bounded state space to reach our conclusion. We then double variables in step 4 before finding moduli of continuity for the diffusion and the jump components respectively in steps 5 and 6. Finally, we reach a contradiction in step 7 and conclude the proof.
\\

\textbf{Step 1: Exponential Transformation}\\

Let $u \in USC([0,T]\times\mathbb{R}^{n})$ be a viscosity subsolution of~\eqref{eq_JDRSAM_HJBPDE} and $v \in LSC([0,T]\times\mathbb{R}^{n})$ be a viscosity supersolution of~\eqref{eq_JDRSAM_HJBPDE}. Define:
\begin{eqnarray}
    \tilde{u} := e^{-\theta v}
                                                            \nonumber\\
    \tilde{v} := e^{-\theta u}
                                                            \nonumber
\end{eqnarray}
By the change of variable property (see for example Proposition 2.2 in Touzi~\cite{to02}), $\tilde{u}$ and $\tilde{v}$ are respectively a viscosity subsolution and a viscosity supersolution of the RS HJB PIDE~\eqref{eq_JDRSAM_exptrans_HJBPDE} for the exponentially transformed value function $\tilde{\Phi}$.
\\

Thus, to prove that
\begin{equation}
    u \leq v
    \quad \textrm{on } [0,T] \times\mathbb{R}^n
                                \nonumber
\end{equation}
it is sufficient to prove that
\begin{equation}
    \tilde{u} \leq \tilde{v}
    \quad \textrm{on } [0,T] \times\mathbb{R}^n
                                \nonumber
\end{equation}
\\

\textbf{Step 2: Setting the Problem}\\
As is usual in the derivation of comparison results, we argue by contradiction and assume that
\begin{eqnarray}\label{as_JDRSAM_comparison_theo_contradiction}
    \sup_{(t,x) \in [0,T] \times\mathbb{R}^n} \left[ \tilde{u}(t,x) - \tilde{v}(t,x) \right] >0
\end{eqnarray}
\\

\textbf{Step 3: Taking the Behaviour of the Value Function into Consideration}\\

The assertion of this theorem is that the comparison result holds in the class of functions satisfying Assumption~\ref{as_JDRSAM_ODEs_alpha_beta}. As a result Proposition~\ref{prop_JDRSAM_behaviour_tildePhi} holds and we can concentrate our analysis on subsolutions and supersolutions sharing the same growth properties as the exponentially transformed value function $\tilde{\Phi}$. By Propositions~\ref{prop_JDRSAM_behaviour_tildePhi} and~\ref{prop_JDRSAM_tildePhi_bounded},
\begin{equation}
    0 < \tilde{u}(t,x) \leq e^{\alpha^k(t)+\beta^{k'}(t)x}
    \quad \forall (t,x) \in [0,T]\times \mathbb{R}^n
                                        \nonumber
\end{equation}
\begin{equation}
    0 < \tilde{v}(t,x) \leq e^{\alpha^k(t)+\beta^{k'}(t)x}
    \quad \forall (t,x) \in [0,T]\times \mathbb{R}^n
                                        \nonumber
\end{equation}
and
\begin{equation}\label{eq_JDRSAM_CompRes_Step3_assymptotic}
    \lim_{\lvert x \rvert \to \infty} \tilde{u}(t,x)
    = \lim_{\lvert x \rvert \to \infty} \tilde{v}(t,x)
    = 0
    \; \forall t \in [0,T]
\end{equation}
for $k=1,\ldots,2n$ where $\alpha^k$ and $\beta^k$ are the functions given in Assumption~\ref{as_JDRSAM_ODEs_alpha_beta}. Since~\eqref{eq_JDRSAM_CompRes_Step3_assymptotic} holds at an exponential rate, then by Assumption~\eqref{as_JDRSAM_comparison_theo_contradiction} there exists $R>0$, such that
\begin{eqnarray}
    \sup_{(t,x)\in[0,T] \times\mathbb{R}^n} \left[ \tilde{u}(t,x) - \tilde{v}(t,x) \right]
    = \sup_{(t,x)\in[0,T] \times\mathcal{B}_{R}} \left[ \tilde{u}(t,x) - \tilde{v}(t,x) \right]
                                        \nonumber
\end{eqnarray}
Hence, it is enough to show a contradiction with respect to the hypothesis
\begin{eqnarray}\label{as_JDRSAM_comparison_theo_contradiction_Q}
    \sup_{(t,x) \in Q} \left[ \tilde{u}(t,x) - \tilde{v}(t,x) \right] >0
\end{eqnarray}
established on the set $Q := [0,T]\times \mathcal{B}_R$. Before proceeding to the next step, we will restate assumption~\eqref{as_JDRSAM_comparison_theo_contradiction_Q} now needs to be restated in terms of $u$ and $v$ as
\begin{eqnarray}\label{as_JDRSAM_comparison_theo_contradiction_Phi}
    \sup_{(t,x) \in Q} \left[ u(t,x) - v(t,x) \right] >0
\end{eqnarray}
\\

\textbf{Step 4:  Doubling of Variables on the Set $Q$}\\

Let $\eta > 0$ be such that
\begin{eqnarray}
    N := \sup_{(t,x) \in Q} \left[ u(t,x) - v(t,x) -\varphi(t) \right] >0
                                            \nonumber
\end{eqnarray}
where $\varphi(t) := \frac{\eta}{t}$.
\\

We will now double variables, a technique commonly used in viscosity
solutions literature (see e.g. Crandall, Ishii and
Lions~\cite{crisli92}). Consider a global maximum point
$(t_{\epsilon},x_{\epsilon},y_{\epsilon}) \in
(0,T]\times \mathcal{B}_R\times \mathcal{B}_R =: Q_d$ of
\begin{eqnarray}
    u(t,x) - v(t,y)
    - \varphi(t)
    - \epsilon\lvert x-y \rvert^2
                                            \nonumber
\end{eqnarray}
and define
\begin{eqnarray}
    N_{\epsilon}
    := \sup_{(t,x,y) \in Q_d}\left[
          u(t,x)
        - v(t,y)
        - \varphi(t)
        - \epsilon \lvert x - y \rvert^2
    \right]>0
                                            \nonumber
\end{eqnarray}
Note that $N_{\epsilon} > 0$ for $\epsilon$ large enough.  Moreover, $N_{\epsilon} \geq N$ and $N_{\epsilon} \downarrow 0$ as $\epsilon \to \infty$.
\\

It is well established (see Lemma 3.1 and Proposition 3.7 in~\cite{crisli92}) that along
a subsequence
\begin{eqnarray}
    \lim_{\epsilon \to \infty} (t_{\epsilon},x_{\epsilon},y_{\epsilon})
    = (\hat{t},\hat{x},\hat{x})
                                            \nonumber
\end{eqnarray}
for some $(\hat{t},\hat{x}) \in [0,T]\times\mathbb{R}^n$ which
is a maximum point of
\begin{eqnarray}
    u(t,x) - v(t,x)
    -\varphi(t)
                                            \nonumber
\end{eqnarray}
Via the same argument, we also have
\begin{equation}
    \lim_{\epsilon \to \infty} \epsilon\lvert x_{\epsilon}-y_{\epsilon} \rvert^2 = 0
                                                            \nonumber
\end{equation}
as well as
\begin{eqnarray}
    \lim_{\epsilon \to \infty} u(t_{\epsilon},x_{\epsilon}) = u(\hat{t},\hat{x})
                                            \nonumber
\end{eqnarray}
and
\begin{eqnarray}
    \lim_{\epsilon \to \infty} v(t_{\epsilon},x_{\epsilon}) = v(\hat{t},\hat{x})
                                            \nonumber
\end{eqnarray}
In addition, we note that
\begin{eqnarray}
    \lim_{\epsilon \to \infty} N_{\epsilon} = N
                                            \nonumber
\end{eqnarray}
\\

Applying Theorem 8.3 in Crandall, Ishii and Lions~\cite{crisli92} at $(t_{\epsilon},x_{\epsilon},y_{\epsilon})$, we see that there exists $a_{\epsilon},b_{\epsilon} \in \mathbb{R}$ and $A_{\epsilon},B_{\epsilon} \in \mathcal{S}_n$ such that
\begin{eqnarray}
    \left(a_{\epsilon}, \epsilon(x_{\epsilon}-y_{\epsilon}), A_{\epsilon} \right) \in \overline{\mathcal{P}}_{u}^{2,+}
                                            \nonumber\\
    \left(b_{\epsilon}, \epsilon(x_{\epsilon}-y_{\epsilon}), B_{\epsilon}\right) \in \overline{\mathcal{P}}_{v}^{2,-}
                                            \nonumber
\end{eqnarray}
\begin{equation}
    a_{\epsilon} - b_{\epsilon} = \varphi'(t_{\epsilon})
                                            \nonumber
\end{equation}
and
\begin{equation}
    -3\epsilon
    \left[\begin{array}{cc}
        I   &   0   \\
        0   &   I
    \end{array}\right]
    \leq
    \left[\begin{array}{cc}
        A_{\epsilon}    &   0                       \\
        0                   &   -B_{\epsilon}
    \end{array}\right]
    \leq
    3\epsilon
    \left[\begin{array}{cc}
        I   &   -I  \\
        -I  &   I
    \end{array}\right]
                                            \nonumber
\end{equation}

Thus, we have for the subsolution $u$
\begin{eqnarray}
    &&
    - a_{\epsilon}
    + F(x_{\epsilon},\epsilon(x_{\epsilon}-y_{\epsilon}),A_{\epsilon})
                                                \nonumber\\
    &&
    +   \int_{\mathbf{Z}} \left\{
        \frac{1}{\theta}\left(
                e^{-\theta \left(u(t_{\epsilon},x_{\epsilon}+\xi(z)) - u(t_{\epsilon},x_{\epsilon}) \right)} - 1
            \right)
        + \epsilon\xi'(z) (x_{\epsilon}-y_{\epsilon}) \right\}
        \nu(dz)
                                                \nonumber\\
    &&
    \leq 0
                                                \nonumber
\end{eqnarray}
and for the supersolution $v$,
\begin{eqnarray}
    &&
    - b_{\epsilon}
    + F(y_{\epsilon},\epsilon(x_{\epsilon}-y_{\epsilon}),B_{\epsilon})
                                                \nonumber\\
    &&
    +   \int_{\mathbf{Z}} \left\{
        \frac{1}{\theta}\left(
                e^{-\theta \left(v(t_{\epsilon},y_{\epsilon}+\xi(z)) - v(t_{\epsilon},y_{\epsilon}) \right)} - 1
            \right)
        + \epsilon\xi'(z) (x_{\epsilon}-y_{\epsilon}) \right\}
        \nu(dz)
                                                \nonumber\\
    &&
    \geq 0
                                                \nonumber
\end{eqnarray}
\\

Subtracting these two inequalities,
\begin{eqnarray}\label{eq_JDRSAM_theo_comp_bd_ineq1_unbd}
    - \varphi'(t_{\epsilon})
    &=& b_{\epsilon} - a_{\epsilon}
                                                \nonumber\\
    &\leq&
        F(y_{\epsilon},\epsilon(x_{\epsilon}-y_{\epsilon}),B_{\epsilon})
    -   F(x_{\epsilon},\epsilon(x_{\epsilon}-y_{\epsilon}),A_{\epsilon})
                                    \nonumber\\
    &&
    +   \int_{\mathbf{Z}} \left\{
        \frac{1}{\theta}\left(
                e^{-\theta \left(v(t_{\epsilon},y_{\epsilon}+\xi(z)) - v(t_{\epsilon},y_{\epsilon}) \right)} - 1
            \right)
        + \epsilon\xi'(z) (x_{\epsilon}-y_{\epsilon}) \right\}
        \nu(dz)
                                    \nonumber\\
    &&
    -   \int_{\mathbf{Z}} \left\{
        \frac{1}{\theta}\left(
                e^{-\theta \left(u(t_{\epsilon},x_{\epsilon}+\xi(z)) - u(t_{\epsilon},x_{\epsilon}) \right)} - 1
            \right)
        + \epsilon\xi'(z) (x_{\epsilon}-y_{\epsilon}) \right\}
        \nu(dz)
                                    \nonumber\\
   &=&
        F(y_{\epsilon},\epsilon(x_{\epsilon}-y_{\epsilon}),B_{\epsilon})
    -   F(x_{\epsilon},\epsilon(x_{\epsilon}-y_{\epsilon}),A_{\epsilon})
                                    \nonumber\\
    &&
    +   \frac{1}{\theta}\int_{\mathbf{Z}} \left\{
                e^{-\theta \left(v(t_{\epsilon},y_{\epsilon}+\xi(z)) - v(t_{\epsilon},y_{\epsilon}) \right)}
        \right\}\nu(dz)
                                    \nonumber\\
    &&
    -   \frac{1}{\theta}\int_{\mathbf{Z}} \left\{
                e^{-\theta \left(u(t_{\epsilon},x_{\epsilon}+\xi(z)) - u(t_{\epsilon},x_{\epsilon}) \right)}
        \right\}\nu(dz)
\end{eqnarray}

\textbf{Step 5: Modulus of Continuity}\\
In this step, we focus on the (diffusion) operator $F$.
\begin{eqnarray}
    && F(y_{\epsilon},\epsilon(x_{\epsilon}-y_{\epsilon}),B_{\epsilon}) - F(x_{\epsilon},\epsilon(x-y),A_{\epsilon})
                                        \nonumber\\
    &=& \sup_{h \in \mathcal{J}}\left\{
        \epsilon f(t_{\epsilon},y_{\epsilon},h)'\left(x_{\epsilon}-y_{\epsilon} \right)
      + \frac{1}{2} \textrm{tr}\left(\Lambda \Lambda' B_{\epsilon}\right)
      - \frac{\theta}{2} \epsilon^2 \left(x_{\epsilon}-y_{\epsilon} \right)'\Lambda\Lambda'\left(x_{\epsilon}-y_{\epsilon} \right)
      - g(y_{\epsilon},h)
        \right\}
                                        \nonumber\\
    &&
        - \sup_{h \in \mathcal{J}}\left\{
              \epsilon f(t_{\epsilon},x_{\epsilon},h)'\left(x_{\epsilon}-y_{\epsilon} \right)
            + \frac{1}{2} \textrm{tr}\left(\Lambda \Lambda' A_{\epsilon}+ \delta I_n\right)
        \right.
                                        \nonumber\\
    &&
        \left.
            - \frac{\theta}{2} \epsilon^2 \left(x_{\epsilon}-y_{\epsilon} \right)'\Lambda\Lambda'\left(x_{\epsilon}-y_{\epsilon} \right)
            - g(x_{\epsilon},h)
        \right\}
                                        \nonumber\\
    &\leq&
          \frac{1}{2} \vert \textrm{tr} \left(\Lambda \Lambda' B_{\epsilon} -\Lambda \Lambda'A_{\epsilon}\right) \rvert
        + \sup_{h \in \mathcal{J}} \left\{
              \epsilon\lvert f(t_{\epsilon},y_{\epsilon},h) - f(t_{\epsilon},x_{\epsilon},h) \rvert \lvert (x_{\epsilon}-y_{\epsilon}) \rvert
        \right\}
                                        \nonumber\\
    &&  + \sup_{h \in \mathcal{J}} \left\{
        \lvert g(x_{\epsilon},h) - g(y_{\epsilon},h)\rvert \right\}
                                        \nonumber\\
    &\leq&
          \frac{1}{2} \vert \textrm{tr} \left(\Lambda \Lambda'A_{\epsilon} - \Lambda \Lambda' B_{\epsilon}\right)\rvert
        + \sup_{h \in \mathcal{J}} \left\{
              \epsilon\lvert f(t_{\epsilon},y_{\epsilon},h) - f(t_{\epsilon},x_{\epsilon},h) \rvert \lvert (x_{\epsilon}-y_{\epsilon}) \rvert
        \right\}
                                        \nonumber\\
    &&  + \sup_{h \in \mathcal{J}} \left\{
        \lvert g(x_{\epsilon},h) - g(y_{\epsilon},h)\rvert \right\}
                                        \nonumber
\end{eqnarray}
\\

Note that the functional $f$ defined in~\eqref{eq_JDRSAM_func_f} satisfies
\begin{equation}
    \left| f(t_{\epsilon},y_{\epsilon},h) - f(t_{\epsilon},x_{\epsilon},h) \right|
    \leq C_f \left| y_{\epsilon} - x_{\epsilon} \right|
                                                                \nonumber
\end{equation}
for some constant $C_f>0$. In addition,
\begin{eqnarray}
    &&\textrm{tr} \left(\Lambda \Lambda' A_{\epsilon} -\Lambda \Lambda'B_{\epsilon}\right)
                                                    \nonumber\\
    &=& \textrm{tr} \left(
            \left[\begin{array}{cc}
                \Lambda \Lambda'     &   \Lambda \Lambda'   \\
                \Lambda \Lambda'     &   \Lambda \Lambda'
            \end{array}\right]
            \left[\begin{array}{cc}
                A_{\epsilon}    &   0               \\
                0               &   -B_{\epsilon}
            \end{array}\right]
         \right)
                                                    \nonumber\\
    &\leq& 3\epsilon \textrm{ tr} \left(
            \left[\begin{array}{cc}
                \Lambda \Lambda'     &   \Lambda \Lambda'   \\
                \Lambda \Lambda'     &   \Lambda \Lambda'
            \end{array}\right]
            \left[\begin{array}{cc}
                I   &   -I  \\
                -I  &   I
            \end{array}\right]
         \right)
                                                    \nonumber\\
    &=& 0
                                                    \nonumber
\end{eqnarray}
Finally, by definition of $g$,
\begin{equation}
    \left| g(y_{\epsilon},h) - g(x_{\epsilon},h) \right|
    \leq C_g \left| y_{\epsilon} - x_{\epsilon} \right|
                                                                \nonumber
\end{equation}
for some constant $C_g>0$. Combining these estimates, we get
\begin{eqnarray}\label{eq_JDRSAM_theo_comp_modulus_continuity}
    && F(y_{\epsilon},\epsilon(x_{\epsilon}-y_{\epsilon}),B_{\epsilon}) - F(x_{\epsilon},\epsilon(x_{\epsilon}-y_{\epsilon}),A_{\epsilon})
                                        \nonumber\\
    &\leq&    \omega(\epsilon\left| y_{\epsilon} - x_{\epsilon} \right|^2 + \left| y_{\epsilon} - x_{\epsilon} \right|)
\end{eqnarray}
for a function $\omega(\zeta) = C \zeta$, with $C = \max\left[C_f,C_g\right]$. The function $\omega: [0,\infty) \to [0,\infty)$, which satisfies the condition $\omega(0^+)=0$, is called a modulus of continuity.
\\

\textbf{Step 6: The Jump Term}\\
We now consider the jump term
\begin{eqnarray}\label{eq_JDRSAM_theo_comp_ineq1_unbd}
   &&
        \frac{1}{\theta}\int_{\mathbf{Z}} \left\{
                e^{-\theta \left(v(t_{\epsilon},y_{\epsilon}+\xi(z)) - v(t_{\epsilon},y_{\epsilon}) \right)}
            -   e^{-\theta \left(u(t_{\epsilon},x_{\epsilon}+\xi(z)) - u(t_{\epsilon},x_{\epsilon}) \right)}
        \right\}\nu(dz)
                                    \nonumber\\
    &=&
        \frac{1}{\theta}\int_{\mathbf{Z}} \left\{
                e^{-\theta \left(v(t_{\epsilon},y_{\epsilon}+\xi(z)) - v(t_{\epsilon},y_{\epsilon}) \right)}
            -   e^{-\theta \left(
                      u(t_{\epsilon},x_{\epsilon}+\xi(z)) - u(t_{\epsilon},x_{\epsilon})
                    + v(t_{\epsilon},x_{\delta}) -v(t_{\epsilon},x_{\delta})
            \right)}
        \right\}\nu(dz)
                                    \nonumber\\
\end{eqnarray}

Since for $\epsilon > 0$ large enough, $u(t,x) - v(t,y) \geq 0$ then
\begin{eqnarray}
        u(t_{\epsilon},x_{\epsilon}+\xi(z))
    -   u(t_{\epsilon},x_{\epsilon})
    +   v(t_{\epsilon},y_{\epsilon})
    -   v(t_{\epsilon},y_{\epsilon}+\xi(z))
    \leq
    -   ( u(t_{\epsilon},x_{\epsilon}) - v(t_{\epsilon},y_{\epsilon}) )
    +   N
                                                \nonumber
\end{eqnarray}
by definition of $N$. Moreover, since $ N_{\epsilon} = \sup_{(t,x,y) \in Q_d}\left[ u(t,x) - v(t,y) - \varphi(t) - \epsilon \lvert x - y \rvert^2 \right]>0$, then $N_{\epsilon} \leq u(t_{\epsilon},x_{\epsilon}) - v(t_{\epsilon},y_{\epsilon})$ and therefore
\begin{equation}
        u(t_{\epsilon},x_{\epsilon}+\xi(z))
    -   u(t_{\epsilon},x_{\epsilon})
    +   v(t_{\epsilon},y_{\epsilon})
    -   v(t_{\epsilon},y_{\epsilon}+\xi(z))
    \leq
        N
    -   N_{\epsilon}
                                                \nonumber
\end{equation}
for $z \in \mathbf{Z}$. Thus,
\begin{eqnarray}
    e^{-\theta \left(
                u(t_{\epsilon},x_{\epsilon}+\xi(z))
            -   u(t_{\epsilon},x_{\epsilon})
            +   v(t_{\epsilon},y_{\epsilon})
            -   v(t_{\epsilon},y_{\epsilon})
        \right)}
    \geq
    e^{-\theta \left(
                v(t_{\epsilon},y_{\epsilon}+\xi(z))
            -   v(t_{\epsilon},y_{\epsilon})
            +   N
            -   N_{\epsilon}
        \right)}
                                                \nonumber
\end{eqnarray}
and equation~\eqref{eq_JDRSAM_theo_comp_ineq1_unbd} can be bounded from above by:
\begin{eqnarray}\label{eq_JDRSAM_theo_comp_ineq2_unbd}
&&
        \frac{1}{\theta}\int_{\mathbf{Z}} \left\{
                e^{-\theta \left(v(t_{\epsilon},y_{\epsilon}+\xi(z)) - v(t_{\epsilon},y_{\epsilon}) \right)}
            -   e^{-\theta \left(u(t_{\epsilon},x_{\epsilon}+\xi(z)) - u(t_{\epsilon},x_{\epsilon})
                    + v(t_{\epsilon},x_{\epsilon}) - v(t_{\epsilon},x_{\epsilon}) \right)}
        \right\}\nu(dz)
                                    \nonumber\\
    &\leq&
        \frac{1}{\theta}\int_{\mathbf{Z}} \left\{
                e^{-\theta \left(v_{\epsilon}(t_{\epsilon},y_{\epsilon}+\xi(z)) - v(t_{\epsilon},y_{\epsilon}) \right)}
            -   e^{-\theta \left(
                v(t_{\epsilon},y_{\epsilon}+\xi(z))
            -   v(t_{\epsilon},y_{\epsilon})
            +   N
            -   N_{\delta}
            \right)}
        \right\}\nu(dz)
                                    \nonumber\\
    &=&
        \frac{1}{\theta}\int_{\mathbf{Z}} \left\{
                e^{-\theta \left(v(t_{\epsilon},y_{\epsilon}+\xi(z)) - v(t_{\epsilon},y_{\epsilon}) \right)}
                \left( 1 - e^{-\theta(N - N_{\epsilon})} \right)
        \right\}\nu(dz)
                                    \nonumber\\
    &=&
       \frac{1}{\theta}\int_{\mathbf{Z}} \left\{
                e^{-\theta \left(-\frac{1}{\theta}\left[\ln\tilde{v}(t_{\epsilon},y_{\epsilon}+\xi(z)) - \ln\tilde{v}(t_{\epsilon},y_{\epsilon})\right] \right)}
                \left( 1 - e^{-\theta(N - N_{\epsilon})} \right)
        \right\}\nu(dz)
                                    \nonumber\\
    &=&
       \frac{1}{\theta}\int_{\mathbf{Z}} \left\{
                \frac{\tilde{v}(t_{\epsilon},y_{\epsilon}+\xi(z))}{\tilde{v}(t_{\epsilon},y_{\epsilon})}
                \left( 1 - e^{-\theta(N - N_{\epsilon})} \right)
        \right\}\nu(dz)
\end{eqnarray}
\\

By Proposition~\ref{prop_JDRSAM_tildePhi_bounded} and since $\tilde{v}$ is LSC, then $\exists \lambda >0 : 0 < \lambda \leq \tilde{v}(t,x) \leq C_{\tilde{\Phi}} \forall (t,x) \in Q$. As a result,
\begin{equation}
    \frac{\tilde{v}(t_{\epsilon},y_{\epsilon}+\xi(z))}{\tilde{v}(t_{\epsilon},y_{\epsilon})} \leq K
                                    \nonumber
\end{equation}
for some constant $K>0$. In addition, since the measure $\nu$ is assumed to be finite and the function $\zeta \mapsto e^{\zeta}$ is continuous, we can establish the following upper bound for the right-hand side of~\eqref{eq_JDRSAM_theo_comp_ineq2_unbd}:
\begin{eqnarray}\label{eq_JDRSAM_theo_comp_ineq3_unbd}
   &&
       \frac{1}{\theta}\int_{\mathbf{Z}} \left\{                \frac{\tilde{v}(t_{\epsilon},y_{\epsilon}+\xi(z))}{\tilde{v}(t_{\epsilon},y_{\epsilon})}
                \left( 1 - e^{-\theta(N - N_{\epsilon})} \right)
        \right\}\nu(dz)
                                                \nonumber\\
   &\leq&
       \frac{K}{\theta}\int_{\mathbf{Z}} 
			\left\{ 1 - e^{-\theta(N - N_{\epsilon})}
        \right\}\nu(dz)
                                                \nonumber\\
   &\leq&
        \omega_R(N-N_{\epsilon})
        \sup_{(t,y) \in [0,T]\times\mathbb{R}^n}
        \nu(\mathbf{Z})
\end{eqnarray}
for some modulus of continuity $\omega_R$ related to the function $\zeta \mapsto 1 - e^{\zeta}$ and parameterized by the radius $R>0$ of the Ball $\mathscr{B}_R$ introduced in Step 3. Note that this parametrization is implicitly due to the dependence of $N$ and $N_{\epsilon}$ on $R$. The term $\sup_{(t,y) \in [0,T]\times\mathbb{R}^n} \nu(\mathbf{Z})$ is the upper bound for the measure $\nu$.
\\

\textbf{Step 7: Conclusion}\\

We now substitute the upper bound obtained in inequalities~\eqref{eq_JDRSAM_theo_comp_modulus_continuity} and~\eqref{eq_JDRSAM_theo_comp_ineq3_unbd} in~\eqref{eq_JDRSAM_theo_comp_bd_ineq1_unbd} to obtain:
\begin{eqnarray}\label{eq_JDRSAM_theo_comp_ineq4_unbd}
    - \varphi'(t_{\epsilon})
    &\leq&  \omega(\epsilon\left| y_{\epsilon} - x_{\epsilon} \right|^2 + \left| y_{\epsilon} - x_{\epsilon} \right|)
            +\omega_R(N-N_{\epsilon})\sup_{(t,x) \in[0,T]\times\mathbb{R}^n} \nu(\mathbf{Z})
\end{eqnarray}
\\

Taking the limit superior in inequality~\eqref{eq_JDRSAM_theo_comp_ineq4_unbd} as $\epsilon \to \infty$ and recalling that
\begin{enumerate}[(1).]
\item the measure $\nu$ is finite;
\item $\xi_i(z), i=1,\ldots, m$ is bounded $\forall z \in \mathbf{Z} \textrm{ a.s. } d\nu$
\end{enumerate}

we see that
\begin{equation}
    \nu(\mathbf{Z})
    < \infty
                                            \nonumber
\end{equation}

Then
\begin{eqnarray}
    \lim_{\epsilon \to 0}\omega_R(N-N_{\epsilon})\nu(\mathbf{Z})
    = 0
                                            \nonumber
\end{eqnarray}
which leads to the contradiction
\begin{equation}
    -\varphi'(t) = \frac{\eta}{t^2} \leq 0
                                                                    \nonumber
\end{equation}
\\

We conclude from this that Assumption~\ref{as_JDRSAM_comparison_theo_contradiction_Phi} is false and therefore
\begin{eqnarray}
    \sup_{(t,x) \in Q} \left[ v(t,x) - u(t,x) \right] \geq 0
\end{eqnarray}
Stated differently, we conclude that
\begin{equation}
    u \leq v
    \quad \textrm{on } [0,T] \times \mathbb{R}^n
                                \nonumber
\end{equation}

\end{proof}


\subsection{Uniqueness}

Uniqueness is a direct consequence of Theorem~\ref{theo_JDRSAM_comparison_unbounded}. Another important corollary is the fact that the (discontinuous) locally bounded viscosity solution $\Phi$ is in fact continuous on $[0,T] \times \mathbb{R}^n$.

\begin{corollary}[Uniqueness and Continuity]\label{coro_JDRSAM_uniqueness_continuity}
    The function $\Phi(t,x)$ defined on $[0,T] \times \mathbb{R}^n$ is the unique continuous viscosity solution of the RS HJB PIDE~\eqref{eq_JDRSAM_HJBPDE} subject to terminal condition~\eqref{eq_JDRSAM_HJBPDE_termcond}.
\end{corollary}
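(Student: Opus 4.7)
The plan is to deduce both uniqueness and continuity as essentially one-line consequences of the comparison principle (Theorem~\ref{theo_JDRSAM_comparison_unbounded}) once it is combined with the existence statement (Theorem~\ref{theo_JDRSAM_viscositysol}) and Corollary~\ref{coro_JDRSAM_viscositysol}. The only non-bookkeeping point is verifying that the hypotheses of the comparison theorem (boundedness of $\nu$ together with Assumption~\ref{as_JDRSAM_ODEs_alpha_beta}, which standing assumptions we retain) apply to the candidate sub- and supersolutions at hand; the growth bounds needed to meet these hypotheses are already supplied by Propositions~\ref{prop_JDRSAM_tildePhi_bounded} and~\ref{prop_JDRSAM_behaviour_tildePhi}.

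\textbf{Continuity of $\Phi$.} First I would argue that $\Phi$ is continuous. By Corollary~\ref{coro_JDRSAM_viscositysol}, $\Phi^{*}$ is a viscosity subsolution of~\eqref{eq_JDRSAM_HJBPDE} and $\Phi_{*}$ is a viscosity supersolution of~\eqref{eq_JDRSAM_HJBPDE}, both with the terminal condition~\eqref{eq_JDRSAM_HJBPDE_termcond}. Passing through the exponential change of variable $\tilde{u}=e^{-\theta\Phi^{*}}$ and $\tilde{v}=e^{-\theta\Phi_{*}}$ as in Step~1 of the proof of Theorem~\ref{theo_JDRSAM_comparison_unbounded}, Proposition~\ref{prop_JDRSAM_tildePhi_bounded} guarantees that $\tilde{u}$ is bounded from above and $\tilde{v}$ is bounded from below (indeed both lie in $[0,\check M]$), so the hypotheses of the comparison result are satisfied. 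Applying Theorem~\ref{theo_JDRSAM_comparison_unbounded} therefore yields $\Phi^{*}\leq\Phi_{*}$ on $[0,T]\times\mathbb{R}^{n}$. Combining with the universal inequality $\Phi_{*}\leq\Phi\leq\Phi^{*}$ gives $\Phi_{*}=\Phi=\Phi^{*}$, which shows $\Phi\in C([0,T]\times\mathbb{R}^{n})$.

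\textbf{Uniqueness.} For uniqueness, suppose $\Psi_{1}$ and $\Psi_{2}$ are two continuous viscosity solutions of~\eqref{eq_JDRSAM_HJBPDE} with terminal datum~\eqref{eq_JDRSAM_HJBPDE_termcond}, each inheriting (by the argument of Step~3 in the proof of Theorem~\ref{theo_JDRSAM_comparison_unbounded}, via the growth controls of Assumption~\ref{as_JDRSAM_ODEs_alpha_beta} and Proposition~\ref{prop_JDRSAM_behaviour_tildePhi}) the boundedness needed to apply comparison. Treating $\Psi_{1}$ as a subsolution and $\Psi_{2}$ as a supersolution, Theorem~\ref{theo_JDRSAM_comparison_unbounded} yields $\Psi_{1}\leq\Psi_{2}$; swapping their roles yields $\Psi_{2}\leq\Psi_{1}$, hence $\Psi_{1}=\Psi_{2}$. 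Since $\Phi$ itself is a continuous viscosity solution by the previous paragraph together with Theorem~\ref{theo_JDRSAM_viscositysol}, the conclusion of the corollary follows.

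\textbf{Main obstacle.} There is no real difficulty here; the whole content of the corollary is packaged into the comparison theorem, so the ``hard part'' reduces to the purely administrative check that the exponential transforms of $\Phi^{*}$, $\Phi_{*}$, and of any competing viscosity solution satisfy the one-sided boundedness and growth hypotheses required by Theorem~\ref{theo_JDRSAM_comparison_unbounded}. Those are precisely the content of Propositions~\ref{prop_JDRSAM_tildePhi_bounded} and~\ref{prop_JDRSAM_behaviour_tildePhi} together with Assumption~\ref{as_JDRSAM_ODEs_alpha_beta}, so the proof is truly a corollary.
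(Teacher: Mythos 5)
Your proposal is correct and follows essentially the same route as the paper: continuity is obtained by applying the comparison theorem to the pair $(\Phi^{*},\Phi_{*})$ from Corollary~\ref{coro_JDRSAM_viscositysol} and sandwiching with $\Phi_{*}\leq\Phi\leq\Phi^{*}$, while uniqueness is the standard two-sided application of comparison to any pair of continuous solutions. Your explicit verification of the boundedness hypotheses via Propositions~\ref{prop_JDRSAM_tildePhi_bounded} and~\ref{prop_JDRSAM_behaviour_tildePhi} is a slightly more careful write-up of what the paper leaves implicit, but it is the same argument.
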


\begin{proof}
    Uniqueness is a standard by-product of Theorem~\ref{theo_JDRSAM_comparison_unbounded}. Continuity can be proved as follows. By definition of the upper and lower semicontinuous envelopes, recall that

\begin{equation}
    \Phi_*  \leq    \Phi    \leq    \Phi^*
                                            \nonumber
\end{equation}

By Corollary~\ref{coro_JDRSAM_viscositysol} $\Phi_*$ and $\Phi^*$ are respectively semicontinuous superolution and subsolution of the RS HJB PIDE~\eqref{eq_JDRSAM_HJBPDE} subject to terminal condition~\eqref{eq_JDRSAM_HJBPDE_termcond}
\\

We note that as a consequence of Theorem~\ref{theo_JDRSAM_comparison_unbounded} is that
\begin{equation}
    \Phi_*  \geq    \Phi^*
                                            \nonumber
\end{equation}

and hence
\begin{equation}
    \Phi_*  =   \Phi^*
                                            \nonumber
\end{equation}
is a continuous viscosity solution of the RS HJB PIDE~\eqref{eq_JDRSAM_HJBPDE} subject to terminal condition~\eqref{eq_JDRSAM_HJBPDE_termcond}.
\\

Hence, $\Phi = \Phi_* = \Phi^*$ and it is the unique continuous viscosity solution of the RS HJB PIDE~\eqref{eq_JDRSAM_HJBPDE} subject to terminal condition~\eqref{eq_JDRSAM_HJBPDE_termcond}.

\end{proof}

Now that we have proved uniqueness and continuity of the viscosity solution $\Phi$ to the RS HJB PIDE~\eqref{eq_JDRSAM_HJBPDE} subject to terminal condition~\eqref{eq_JDRSAM_HJBPDE_termcond}, we can deduce that the RS HJB PIDE~\eqref{eq_JDRSAM_exptrans_HJBPDE} subject to terminal condition~\eqref{eq_JDRSAM_exptrans_HJBPDE_termcond} also has a unique continuous viscosity solution. We formalize the uniqueness and continuity of $\tilde{\Phi}$ in the following corollary:


\begin{corollary}[Uniqueness and Continuity]\label{coro_JDRSAM_uniqueness_continuity_Phi_tilde}
    The function $\tilde{\Phi}(t,x)$ defined on $[0,T] \times \mathbb{R}^n$ is the unique continuous viscosity solution of the RS HJB PIDE~\eqref{eq_JDRSAM_exptrans_HJBPDE} subject to terminal condition~\eqref{eq_JDRSAM_exptrans_HJBPDE_termcond}.
\end{corollary}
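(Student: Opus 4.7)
My plan is to deduce this corollary directly from Corollary~\ref{coro_JDRSAM_uniqueness_continuity} via the exponential change of variable $\tilde{\Phi} = e^{-\theta\Phi}$. The enabling tool is the change-of-variable property for viscosity solutions (Proposition 2.2 in Touzi~\cite{to02}), which was already invoked in Step~5 of the proof of Theorem~\ref{theo_JDRSAM_viscositysol} to translate between \eqref{eq_JDRSAM_HJBPDE} and \eqref{eq_JDRSAM_exptrans_HJBPDE}. Since $\varphi(x)=e^{-\theta x}$ is $C^{1}$ and strictly monotone on $\mathbb{R}$, composition with $\varphi$ or $\varphi^{-1}$ preserves the viscosity sub-/supersolution property (with the sub-/super-roles swapped according to the sign of $\theta$).

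\textbf{Existence and continuity.} By Corollary~\ref{coro_JDRSAM_uniqueness_continuity}, $\Phi$ is continuous on $[0,T]\times\mathbb{R}^{n}$, so $\tilde{\Phi}=e^{-\theta\Phi}$ is continuous as a composition of continuous functions. That $\tilde{\Phi}$ is a viscosity solution of \eqref{eq_JDRSAM_exptrans_HJBPDE} has already been established along the way of proving Theorem~\ref{theo_JDRSAM_viscositysol} (that proof in fact proceeds by first showing the viscosity property for $\tilde{\Phi}$ and then transferring it back to $\Phi$). The terminal condition \eqref{eq_JDRSAM_exptrans_HJBPDE_termcond} follows from \eqref{eq_JDRSAM_HJBPDE_termcond} via the identity \eqref{eq_JDRSAM_relationship_Phi_tildePhi}.

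\textbf{Uniqueness.} Let $\tilde{\Psi}\in C([0,T]\times\mathbb{R}^{n})$ be any continuous viscosity solution of \eqref{eq_JDRSAM_exptrans_HJBPDE} satisfying \eqref{eq_JDRSAM_exptrans_HJBPDE_termcond} together with the a priori bounds of Propositions~\ref{prop_JDRSAM_tildePhi_bounded} and \ref{prop_JDRSAM_behaviour_tildePhi} (in particular $\tilde{\Psi}>0$, which is the natural class in which uniqueness is asserted). Define $\Psi := -\tfrac{1}{\theta}\ln \tilde{\Psi}$. Applying the change-of-variable result with the $C^{1}$, strictly monotone mapping $y\mapsto -\tfrac{1}{\theta}\ln y$ on $(0,\infty)$, $\Psi$ is a continuous viscosity solution of \eqref{eq_JDRSAM_HJBPDE} with terminal datum $\Psi(T,x)=\ln v$. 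Corollary~\ref{coro_JDRSAM_uniqueness_continuity} then forces $\Psi=\Phi$, whence $\tilde{\Psi}=e^{-\theta\Psi}=e^{-\theta\Phi}=\tilde{\Phi}$.

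\textbf{Main obstacle.} The only delicate point is ensuring that an arbitrary continuous viscosity solution $\tilde{\Psi}$ of \eqref{eq_JDRSAM_exptrans_HJBPDE} is strictly positive so that its logarithm is well-defined and the hypotheses of the change-of-variable result are met. This forces the uniqueness statement to be understood within the class of non-negative solutions satisfying the bounds derived in Section~4. If one wishes to dispense with this implicit restriction, the alternative is to repeat the proof of Theorem~\ref{theo_JDRSAM_comparison_unbounded} verbatim at the level of \eqref{eq_JDRSAM_exptrans_HJBPDE}; since that proof already operates on the exponentially transformed equation internally (see Step~1 there), the adaptation is essentially notational and yields the same conclusion.
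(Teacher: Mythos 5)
Your proposal is correct and follows essentially the same route the paper intends: the paper states this corollary as an immediate deduction from Corollary~\ref{coro_JDRSAM_uniqueness_continuity} via the exponential change of variable, giving no further detail, and your argument is the natural fleshing-out of that deduction (including the sensible caveat that uniqueness is understood within the class of positive solutions satisfying the bounds of Section~4, so that the logarithmic change of variable is licit).
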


%
%

\section{Conclusion}
In this chapter, we considered a risk-sensitive asset management model with assets and factors modelled using affine jump-diffusion processes. This apparently simple setting conceals a number of difficulties, such as the unboundedness of the instantaneous reward function $g$ and the high nonlinearity of the HJB PIDE, which make the existence of classical $C^{1,2}$ solution unlikely barring the introduction of significant assumptions. As a result, we considered a wider class of weak solutions, namely viscosity solutions. We proved that the value function of a class of risk sensitive control problems and established uniqueness by proving a non-standard comparison result. The viscosity approach has proved remarkably useful at solving difficult control problems for which the classical approach may fail. However, it is limited by the fact that it only provides continuity of the value function and by its focus on the PDE in relative isolation from the actual optimization problem. The question is where to go from there? A possible avenue of research would be to look for a method to establish smootheness of the value function, for example through a connection between viscosity solutions and classical solutions. Achieving this objective may also require changes to the analytic setting in order to remove some of the difficulties inherent in manipulating unbounded functions.
\\

%
%

\end{document}